\providecommand{\thisvolume}[1]{this volume of EPTCS, Open Publishing Association} 
\newtheorem{theorem}{Theorem}
\newtheorem{lemma}{Lemma}
\newcommand{\coma}[1][\relax]{\ensuremath{\looparrowleft^{#1}}}
\newcommand{\subt}{\ensuremath{\leq}}
  \def\alpha{alpha}
  \def\rho{rho}
  \def\pi{pi}
  \def\lambda{lambda}
  \def\Omega{Omega}
  \def\Psi{Psi}
\def\HOPI{{\normalfont HO}\PI}
\def\HOPSI{{\normalfont HO}\PSI}
\def\EPI{\ensuremath{\prescript{e}{}{\PI}}}
\def\ENV{\ensuremath{\Gamma}}
\def\BOM{\ensuremath{\vdash}}
\def\CH{\text{ch}}
\def\DR{\text{drop}}
\def\WRONG{\ensuremath{\textbf{WRONG}}}
\def\TJUDG{\ensuremath{\mathcal{J}}}
\def\TYPES{\ensuremath{\textbf{\textit{Types}}}}
\ifengineTF{\pdftexengine}{
  \def\frame@args[#1]#2{\ensuremath{\mathcal{F}_{\kern-2pt #1}\kern-2pt\ARGS{#2}}}
}{}
\title{A Generic Type System for Higher-Order \PSI-calculi}
\author{%
\and
Alex Rønning Bendixen
\institute{Department of Computer Science\\ Aalborg University, Denmark}
\and
Bjarke Bredow Bojesen
\institute{Department of Computer Science\\ Aalborg University, Denmark}
\and
\and
Hans Hüttel 
\institute{Department of Computer Science\\ University of Copenhagen, Denmark}
\institute{Department of Computer Science\\ Aalborg University, Denmark}
\email{hans.huttel@di.ku.dk}
\and
Stian Lybech\thanks{The work by this author was partly supported by the Icelandic Research Fund Grant No.\@ 218202-05(1-3).} 
\institute{Department of Computer Science \\ Reykjavík University, Iceland}
\email{stian21@ru.is}
}
\let\asslparen=\llparenthesis
\let\assrparen=\rrparenthesis
\begin{document}
\ifengineTF{\xetexengine || \luatexengine}{
\def\llparenthesis{\asslparen}
\def\rrparenthesis{\assrparen}
}{}
\maketitle

\begin{abstract}
The Higher-Order \PSI-calculus framework (\HOPSI) is a generalisation of many first- and higher-order extensions of the \PI-calculus. 
It was proposed by Parrow et al.\@ who showed that higher-order calculi such as \HOPI{} and CHOCS can be expressed as \HOPSI-calculi.
In this paper we present a generic type system for \HOPSI-calculi which extends previous work by Hüttel on a generic type system for first-order \PSI-calculi.  
Our generic type system satisfies the usual property of subject reduction and can be instantiated to yield type systems for variants of \HOPI, including the type system for termination due to Demangeon et al.. 
Moreover, we derive a type system for the \RHO-calculus, a reflective higher-order calculus proposed by Meredith and Radestock.  
This establishes that our generic type system is richer than its predecessor, as the \RHO-calculus cannot be encoded in the \PI-calculus in a way that satisfies standard criteria of encodability.
\end{abstract}

\section{Introduction}
Process calculi are formalisms for modelling and reasoning about concurrent and distributed computations; a prominent example being the \PI-calculus of Milner et al.\@ \cite{PICALC, sangiorgi2003pi}, which models computation as communication between processes, by passing messages on named channels. 
Since its inception, a multitude of variants of the \PI-calculus have appeared; e.g.\@ D\PI{} \cite{DPICALC}, the calculus of explicit fusions \cite{gardner2000explicit}, the spi-calculus with correspondence assertions \cite{ABADI19991} and the \EPI-calculus \cite{CARBONEMAFFEIS}. 
These calculi are all \emph{first-order}, in the sense that only atomic channel names can be passed around, not processes themselves. 
Bengtson et al.\@ \cite{bengtson2009psi, bengtson2011psi} created \emph{\PSI-calculi} as a generalisation of these first-order variants and extensions, allowing a range of calculi, including all of the aforementioned, to be expressed as \emph{instances} of the \PSI-calculus framework through appropriate settings of a small number of parameters. 
However, there also exist \emph{higher-order} variants of the \PI-calculus, such as the Higher-Order \PI-calculus, \HOPI, \cite{HOPICALC, SANGIORGIPHD}, that also allow \emph{processes} to be sent across channels.   
Parrow et al.\@ \cite{parrow2014higher} have extended the \PSI-calculus framework with a construct for higher-order communication, creating the \emph{Higher-Order} \PSI-calculus, \HOPSI. 
Calculi such as \HOPI{} and CHOCS \cite{CHOCS} can now be represented as \HOPSI-instances, as well as every calculus that the `first-order' \PSI-calculus framework can represent. 

One of the techniques for reasoning about processes is that of \emph{type systems}.  
The first type system for a process calculus is due to Milner \cite{PICALC} and deals with the notion of correct usage of channels in the $\pi$-calculus: 
In a well-typed process only names of the correct type can be communicated.  
Pierce and Sangiorgi \cite{PICAPABLE} later described a type system that uses subtyping and capability tags to control the use of names as input or output channels; and also many of the aforementioned first-order extensions of the \PI-calculus have been given type systems to capture such properties as secrecy, authenticity and access control.

In \cite{huttel2011typed}, Hüttel noted that these type systems, despite arising in different settings, share certain characteristics:
The type judgments for processes $P$ are all of the form $\ENV \BOM P$ where $\ENV$ is a type environment recording the types of the free names in $P$, so processes are only classified as being either well-typed or not.  
On the other hand, terms $M$ are given a type $T$, so type judgements for terms are of the form $\ENV \BOM M : T$.  
Based on these shared characteristics, Hüttel then created a generic type system for the first-order \PSI-calculus framework, that generalises several of the type systems for the \PI-calculus and its variants.
This generic type system can similarly be instantiated through parameter settings to yield both well-known and new type systems for the calculi that are representable as first-order \PSI-calculi.  
An important advantage of this approach is that a general result of type system soundness can be formulated, which is then inherited by all instances of the type system.

There has been some other works on generic type systems, notably those of König \cite{konig2005}, Caires \cite{caires2007} and Igarashi and Kobayashi \cite{igarashi2004}. 
However, these are formulated for variants of the first-order \PI-calculus, which thus limits their applicability to languages that can be represented in the first-order paradigm. 
Stated otherwise, they exclude languages such as the aforementioned \EPI-calculus, which cannot be encoded into the first-order \PI-calculus, as shown in \cite{CARBONEMAFFEIS}, but which nevertheless can be given a type system using the generic approach of Hüttel, indicating that the latter is a more general framework for first-order calculi. 

However, the generic type system of Hüttel can only type \emph{first-order} calculi; it cannot be instantiated to yield type system for \emph{higher-order} calculi, such as \HOPI{} or CHOCS.
Both of these higher-order calculi can be encoded into the first-order \PI-calculus, as shown by Sangiorgi in \cite{HOPICALC}, and may therefore also be represented in just the first-order \PSI-calculus.
Not surprisingly, there is therefore little work on type systems for higher-order calculi, since these encodings allow us to disregard the higher-order behaviour and instead just type the first-order translations.
One exception is the type system for termination in variants of \HOPI{}, due to Demangeon et al.\@ \cite{demangeon2010550}.
As these authors argue, it may not always be desirable (or even possible) to type a higher-order language through a first-order representation, if the language contains features that are difficult (or impossible) to encode. 
For example, higher-order behaviour may alternatively be viewed as a special case of \emph{reflection}; i.e.\@ the ability of a program to turn code into data, modify or compute with it, and reinstantiate it as running code; and process mobility here appears as a special case where data (code/processes) are transmitted without modification.

This reflective capability is inherent in the Reflective Higher-Order (RHO or \RHO) calculus of Meredith and Radestock \cite{RHOCALC}, and this calculus cannot be uniformly encoded in the \PI-calculus, as shown in \cite{lybech2022encodabilitypaper}.
This calculus therefore gives us an example of a language that cannot easily be represented in the first-order paradigm, thus making it difficult (or impossible) to adapt any of the existing first-order type systems to this language.
Yet it \emph{can} be instantiated as a \HOPSI-calculus, as we shall show in the following.

The goal of the present paper is therefore to extend the aforementioned generic type system by Hüttel, to create a generic type system for the \HOPSI-calculus framework that will allow us to capture typability in the higher-order paradigm.
It allows us to identify what should be required of type systems for higher-order process calculi that are instances of the \HOPSI-calculus, and these requirements here take the form of a number of assumptions that must hold for each instance.
Like its predecessor, our generic type system also satisfies a general subject reduction property that is inherited by all instances.
We use this to formulate simple type systems for \HOPI, and we show that the type system for termination by Demangeon et al.\@ also can be captured as an instance of our type system.
Lastly, we show that our generic type system can be instantiated to yield a type system for the \RHO-calculus, which establishes that our type system is richer than the first-order type systems.
To our knowledge, no type system has hitherto been published for this calculus, so we regard this instance as a further contribution of the present paper.
A technical report with full proofs of most results is available in \cite{expresssos2022techreport}.

\begin{formatcalculus}[psi]  
\section{The higher-order \PSI-calculus}\label{sec:hopsi}
The Higher-Order \PSI-calculus extends the original \PSI-calculus \cite{bengtson2009psi} with primitives for higher-order communication, i.e.\@ process mobility.
In this section we first review the syntax of \HOPSI{} as given in \cite{parrow2014higher}, and then proceed to give a reduction semantics for the calculus.

\subsection{Syntax}
The Higher-Order \PSI-calculus is a general framework, which is intended to allow many different calculi to be obtained as instances, by setting a small number of parameters which takes the form of definitions of three (not necessarily disjoint) sets of \emph{terms, conditions} and \emph{assertions}.
To allow the framework to be as general and flexible as possible, the authors of \cite{bengtson2009psi, parrow2014higher} identify only a few restrictions that must be imposed on these sets: they must be \emph{nominal datatypes}. 
Informally, a \emph{nominal set}, in the sense of Gabbay and Pitts \cite{gabbay2002nominal}, is a set whose members can be affected by names being bound or swapped. 
If $a,b$ are names and $X$ is an element of a nominal set, then the \emph{transposition} of $a$ and $b$ on $X$, written $\SWAP{a}{b}{X}$, swaps all occurrences of $a$ for $b$ in $X$ and vice versa.
A function on a nominal set is \emph{equivariant}, if it is unaffected by name swapping; and a \emph{nominal datatype} is a nominal set together with a set of equivariant functions on it.
This requirement is very mild and allows e.g.\@ non-well-founded sets to be used in an instantiation.
The utility of this shall become apparent later, when we create a \HOPSI-calculus instance where the set of processes (which itself contains terms) is included in the set of terms.

Another important notion is that of support: if $X$ is an element of a nominal set, the \emph{support} of $X$, written $\SUPPORT{X}$, is the set of names that occur in $X$.  
Conversely, a name $a$ is \emph{fresh for $X$}, written $\FRESH{a}X$, if $a \notin \SUPPORT{X}$; and we extend this to sets of names $A$ such that $\FRESH{A}X$ if it is the case that $\forall a \in A. a \notin \SUPPORT{X}$.
This is pointwise extended to lists of elements $X_1, \ldots, X_n$, so we write $\FRESH{A}X_1, \ldots, X_n$ for $\FRESH{A}X_1 \land \ldots \land \FRESH{A}X_n$.

As mentioned above, any \PSI-calculus instance requires a specification of three nominal datatypes: the terms, conditions and assertions.  
The datatype of \emph{terms}, ranged over by $M,N \in \TERMS$, contains the terms that can be communicated and used as channels.  
These could be e.g.\@ single names, as in the monadic \PI-calculus, vectors of names as in \EPI{} and the polyadic \PI-calculus; or elements of a composite datatype (e.g.\@ the integers).  
The datatype of \emph{conditions}, ranged over by $\PHI \in \CONDITIONS$ contains the conditions that can be used in conditional process expressions.  
Finally, and importantly, we have the nominal datatype of \emph{assertions}, ranged over by $\PSI \in \ASSERTIONS$. 
Each of the datatypes \TERMS, \CONDITIONS{} and \ASSERTIONS{} must include an equivariant \emph{substitution function}, written $\ARGS{\cdot}\REPLACE{\VEC{a}}{\VEC{M}}$, substituting tuples of terms $\VEC{M}$ for tuples of names $\VEC{a}$ of equal arity.
It must be defined such that it satisfies the following \emph{substitution laws:} 
\begin{enumerate}
  \item If $\VEC{a} \subseteq \SUPPORT{X}$ and $b \in \SUPPORT[big]{\VEC{Y}}$ then $b \in \SUPPORT{ X\REPLACE{\VEC{a}}{\VEC{Y}} }$
  \item If $\FRESH{\VEC{u}}X,\VEC{v}$ then $ X\REPLACE{\VEC{v}}{\VEC{Y}} = \PAREN[big]{ \SWAP{\VEC{u}}{\VEC{v}}{X} }\REPLACE{\VEC{u}}{\VEC{Y}} $
\end{enumerate}

The requirements are quite general and should be satisfied by any ordinary definition of substitution: The first law states that names cannot be lost in substitution, i.e.\@ the names present in $\VEC{Y}$ must also be present when the substitution has been performed; whilst the second law states that substitution cannot be affected by transposition.

Since the calculus allows arbitrary terms to be used as channels, any \PSI-calculus instance requires a definition of two equivariant operators, \emph{channel equivalence} $\CHANEQ$ and \emph{assertion composition} $\COMPOSE$, a \emph{unit element} $\ASSUNIT$ of assertions, and an \emph{entailment relation} $\ENTAILS$, defined on the respective nominal datatypes and with the following signatures:

\begin{center}
\begin{math}
\begin{array}{ r @{~} l }
  \CHANEQ \;: \TERMS \times \TERMS \to \CONDITIONS          & \text{channel equivalence}   \\
  \COMPOSE : \ASSERTIONS \times \ASSERTIONS \to \ASSERTIONS & \text{assertion composition}
\end{array}\hspace*{0.5cm}
\begin{array}{ r @{~} l }
  \ASSUNIT \in \ASSERTIONS                                  & \text{assertion unit}        \\
  \ENTAILS \subseteq \ASSERTIONS \times \CONDITIONS         & \text{entailment relation}
\end{array}
\end{math}
\end{center}

\noindent We write the entailment relation as $\PSI \ENTAILS \PHI$ instead of $(\PSI, \PHI) \in\; \ENTAILS$ to denote that the condition $\PHI$ holds, given the assertions $\PSI$.
Note that comparison by channel equivalence $M_1 \CHANEQ M_2$ is itself a condition, which may or may not be entailed by some assertions $\PSI$, according to the definition of the entailment relation.

The set of \HOPSI-calculus \emph{processes} $\PROC$ is generated by the formation rules:
\begin{center}
\begin{syntax}[h]
  P \in \PROC
    \IS \NIL                              \NAME{Nil}
    \OR P_1 \PAR P_2                      \NAME{Parallel}
    \OR \OUTPUT{M}{N} . P                 \NAME{Output}
    \OR \INPUT{M}{\VEC{x}:\VEC{T}}{N} . P \NAME{Input}
    \ISOR \RUN{M}                         \NAME{Invocation}
    \OR \CASE \VEC{\PHI} : \VEC{P}        \NAME{Selection}
    \OR \NEW{x:T}P                        \NAME{Restriction}
    \OR \REPL{P}                          \NAME{Replication}
    \OR \ASSERT{\PSI}                     \NAME{Assertion}
\end{syntax}
\end{center}

\noindent where $\VEC{\PHI} : \VEC{P} \DEFSYM \PHI_1 : P_1 \CSOR \ldots \CSOR \PHI_n : P_n$.

Most of these constructs are similar to those of the \PI-calculus; the input and output prefixes generalise those of the \PI-calculus, since here both subject and object are \emph{terms} rather than just names.
Thus $\OUTPUT{M}{N}.P$ outputs the term $N$ on $M$ and continues as $P$, whilst $\INPUT{M}{\VEC{x}:\VEC{T}}{N}.P$ receives a term (e.g.\@ $K$) on $M$ that must match the pattern $N$.  
Here, $\VEC{x}$ is a list of pattern variables, binding into $N$ and $P$, that is used to extract subterms from $K$ that will then be substituted for the occurrences of $\VEC{x}$ within the continuation $P$.
Unlike the presentation in \cite{parrow2014higher}, we here use a typed version of the language: thus the types of the pattern variables are found in the list $\VEC{T}$ where $|\VEC{x}|=|\VEC{T}|$, and likewise, in the restriction $\NEW{x:T}P$, we annotate the name $x$ bound in $P$ with its type $T$.

The selection construct $\CASE \VEC{\PHI} : \VEC{P}$ is a shorthand for a list of cases and is to be understood as saying: If condition $\PHI_i$ is entailed by the assertions $\PSI$, we continue as $P_i$.
If more than one condition is entailed, the process is chosen non-deterministically.
This construct thus generalises the choice and matching operators of the \PI-calculus.

Higher-order communication is handled by representing processes as terms, thus allowing them to be communicated. 
We assume the existence of assertions of the form $M \HANDLE P$. 
By writing such an assertion, $M$ becomes a \emph{handle} of the process $P$, and we can then send $P$ by sending its handle.
Thence $M$ may be used to activate the process $P$, and for this we use the only construct that is new to the higher-order setting, the invocation construct $\RUN{M}$.
Note that the set of processes may itself be included in the set of terms, thus allowing assertions of the form $P \HANDLE P$ whereby a process becomes a handle for itself.

Lastly, an assertion $\ASSERT{\PSI}$ is said to be \emph{guarded}, if it occurs as a subterm of an input or output, and \emph{unguarded} otherwise.
The authors in \cite{parrow2014higher} impose the restriction that no assertion may occur unguarded in the processes in a conditional expression $\CASE \VEC{\PHI} : \VEC{P}$, nor in a replicated process $\REPL{P}$, nor in processes spawned by a $\RUN{M}$ operator. 
We say that processes conforming to this criterion are \emph{well-formed}, and we shall only consider well-formed processes in the following.

\subsection{Reduction semantics}\label{sec:psi-semantics}
Unlike previous presentations such as \cite{bengtson2011psi,parrow2014higher} we here use reduction semantics, as this will simplify our account of the generic type system.
As in other reduction semantics for process calculi, we introduce a notion of structural congruence, $\STRUCTEQ_S$, as the least congruence on process terms containing \ALPHA-equivalence, the commutative monoidal rules for parallel composition, and the rule for scope extrusion:
\begin{equation*}
 \RULENAME[s-scope]{S-Scope} ~ \NEW{x:T}P \PAR Q \STRUCTEQ_S \NEW{x:T}\PAREN{P \PAR Q} \quad\text{\normalfont if } \FRESH{x}Q 
\end{equation*}

We introduce a parametrised, asymmetric \emph{evaluation relation} $\cdot \ENVSEP \cdot \EVALTO \cdot$ to properly handle case expressions and unfolding of $\RUN{M}$ terms, both of which may depend on the assertions currently in effect.
It replaces the usual structural congruence rule in the reduction semantics to ensure that neither of these operations may be reversed by a reverse reading of the rules, whilst including $\STRUCTEQ_S$ for the other kinds of process rewrites where symmetry is unproblematic.
The \emph{reduction relation} $\cdot \ENVSEP \cdot \trans \cdot$, including the evaluation relation, is then given by the rules in figure~\ref{fig:reduction-semantics}, and reductions are thus on the form $\PSI \ENVSEP P \trans P'$, i.e.\@ relative to a global $\PSI$ containing the assertions currently in effect. 

\begin{figure}[t]
\begin{center}
\begin{semantics}
  \RULE[e-res][e-res](\FRESH{x}\PSI)
    { \PSI \ENVSEP P \EVALTO P' }
    { \PSI \ENVSEP \NEW{x:T}P \EVALTO \NEW{x:T}P' }

  \RULE[e-struct][e-struct]
    { P \STRUCTEQ_S P' }
    { \PSI \ENVSEP P \EVALTO P' }
\end{semantics}
\begin{semantics}
    \RULE[e-case][e-case]
    { \PSI \ENTAILS \PHI_i }
    { \PSI \ENVSEP \CASE \VEC{\PHI} : \VEC{P} \EVALTO P_i }

  \RULE[e-run][e-run]
    { \PSI \ENTAILS M \HANDLE P }
    { \PSI \ENVSEP \RUN{M} \EVALTO P }
\end{semantics}

\begin{semantics}
  \RULE[e-par][e-par](\FRESH{\FRAME[\nu]{Q}}{\PSI, \FRAME[\nu]{P}, P})
    { \PSI \COMPOSE \FRAME[\PSI]{Q} \ENVSEP P \EVALTO P' }
    { \PSI \ENVSEP P \PAR Q \EVALTO P' \PAR Q }
\end{semantics}
\begin{semantics}
  \RULE[e-rep][e-rep]
    { }
    { \PSI \ENVSEP \REPL{P} \EVALTO P \PAR \REPL{P} }
\end{semantics}

\begin{semantics}
  \RULE[r-com][r-com]
    { \PSI \ENTAILS M \CHANEQ K }
    { \PSI \ENVSEP \OUTPUT{M}{N\REPLACE{\VEC{x}}{\VEC{L}}}.P \PAR \INPUT{K}{\VEC{x}:\VEC{T}}{N}.Q \trans P \PAR Q\REPLACE{\VEC{x}}{\VEC{L}} }
\end{semantics}

\begin{semantics}
  \RULE[r-eval][r-eval]
    { \PSI \ENVSEP P \EVALTO Q \AND \PSI \ENVSEP Q \trans P' }
    { \PSI \ENVSEP P \trans P' }
\end{semantics}
\begin{semantics}
  \RULE[r-res][r-res](\FRESH{x}\PSI)
    { \PSI \ENVSEP P \trans P' }
    { \PSI \ENVSEP \NEW{x:T}P \trans \NEW{x:T}P' }
\end{semantics}

\begin{semantics}
  \RULE[r-par][r-par](\FRESH{\FRAME[\nu]{Q}}{\PSI, \FRAME[\nu]{P}, P})
    { \PSI \COMPOSE \FRAME[\PSI]{Q} \ENVSEP P \trans P' }
    { \PSI \ENVSEP P \PAR Q \trans P' \PAR Q }
\end{semantics}
\end{center}
\caption{Reduction semantics for the \HOPSI-calculus}
\label{fig:reduction-semantics}
\end{figure}
\end{formatcalculus}

New assertions $\ASSERT{\PSI}$ may also appear in the syntax and therefore become enabled during the evolution of the program.
These are collected by the \emph{frame function} $\FRAME[\PSI]{P}$ in the \nameref{r-par} and \nameref{e-par} rules; and likewise are any new names $\NEW{x:T}$ collected by $\FRAME[\nu]{P}$, used in the side conditions to ensure freshness of $x$ w.r.t.\@ $\PSI$ and the process in parallel composition.
The relevant clauses for $\FRAME[\PSI]{P}$ and $\FRAME[\nu]{P}$ are:
\begin{center}
\begin{math}
\begin{array}{r @{~} l}
  \FRAME[\PSI]{P \PAR Q}      & \DEFSYM \FRAME[\PSI]{P} \COMPOSE \FRAME[\PSI]{Q} \\
  \FRAME[\PSI]{\NEW{x:T}P}    & \DEFSYM \FRAME[\PSI]{P}                          \\
  \FRAME[\PSI]{\ASSERT{\PSI}} & \DEFSYM \PSI
\end{array}\hspace*{1cm}
\begin{array}{r @{~} l}
  \FRAME[\nu]{P \PAR Q}       & \DEFSYM \FRAME[\nu]{P} \UNION \FRAME[\nu]{Q}    \\
  \FRAME[\nu]{\NEW{x:T}P}     & \DEFSYM \SET{x} \UNION \FRAME[\nu]{P}           \\
                              &
\end{array}
\end{math}
\end{center}

\noindent and with all remaining clauses of the forms $\FRAME[\PSI]{P} \DEFSYM \ASSUNIT$ and $\FRAME[\nu]{P} \DEFSYM \EMPTYSET$ respectively.

\begin{formatcalculus}[psi]
\section{The generic type system}\label{sec:type-system}
The goal of our generic type system is to be able to instantiate it such that we obtain a sound type system for a given \HOPSI-calculus instance. 
As in other type systems, we need to describe when processes are well-typed, but since we in the \HOPSI-calculus also have terms, conditions and assertions, we shall therefore also need a way to decide when \emph{they} are well-typed.
However, since these nominal datatypes are parameters to the \HOPSI-calculus, we cannot specify a set of type rules for them, as we can with processes.
Instead, such rules must likewise be provided as parameters to create an instance of the generic type system, and these rules must then satisfy a number of requirements, here denoted \emph{instance assumptions}, which we shall need in the proof for subject reduction.
We describe them in detail below, in section~\ref{sec:parameters-and-assumptions}.

\subsection{Types and type judgements}
Types can contain names, and we assume that the set of types \TYPES{} is a nominal datatype ranged over by $T$; however, we do not allow substitution of terms for names \emph{inside} types.\footnote{As we shall see in our examples of instantiations of the type system, a type $T$ may itself contain a type environment $\Gamma$, which thus may contain names with type annotations.
By this requirement, we disallow that such names may be substituted for terms.}
Furthermore, we need the concept of a type environment $\ENV$ to record the types of free names; thus $\ENV$ is a partial function with finite support $\ENV : \NAMES \PARTIAL \TYPES$.
We can think of $\ENV$ as a set of tuples $\ENV \subseteq \NAMES \times \TYPES$ where $(x, T) \in \ENV$ if $\ENV(x) = T$, and we write $\ENV, x : T$ to denote the type environment $\ENV$ extended by the name $x$ with type $T$.

As usual, our type judgments will be relative to a type environment $\ENV$.
However, due to the presence of assertions which may affect the well-typedness of a process, term, condition, or indeed an assertion, our type judgments must also be relative to a global assertion $\PSI$.
As it may be composed with assertions appearing in a process, we shall therefore also need the notion of a specialisation preorder on assertions.  
We say that $\PSI_1 \leq \PSI_2$ if there exists a $\PSI$ such that $\PSI_2 = \PSI_1 \COMPOSE \PSI$, and $\supn(\PSI_1) \subseteq \supn(\PSI_2)$.

Given the above, type judgements for processes will be of the form $\ENV, \PSI \BOM P$.
As previously mentioned, the type rules for terms, assertions and conditions will depend on how these parameters are defined for a specific instance of the \HOPSI-calculus, and they must therefore be provided as part of the instantiation of the generic type system.
However, like type judgments for processes, they must also be relative to a type environment $\ENV$ and a global $\PSI$, so we require that they be of the form $\ENV, \PSI \BOM \TJUDG$, where $\TJUDG$ is defined by the formation rules:
\begin{center}
\begin{syntax}[h]
  \TJUDG \DEFSYM M : T \OR \PHI \OR \PSI
\end{syntax}
\end{center}

\subsection{Channel compatibility}
When we type an input or output prefix term, the type of the subject $M$ and the type of the object (the term transmitted on channel $M$) must be compatible w.r.t.\@ a \emph{compatibility predicate} $\coma$ that describes which types of values can be carried by channels of a given types.
Thus, $T_1 \looparrowleft T_2$ denotes that channels of type $T_1$ can carry terms of type $T_2$, and we require that the set of types be defined such that this holds.
Furthermore, we distinguish between output compatibility $\coma[+]$, and input compatibility $\coma[-]$, and we write $T_1 \coma T_2$ if both $T_1 \coma[+] T_2$ and $T_1 \coma[-] T_2$.

As an example, consider the channel types in the sorting system by Milner \cite{PICALC}. 
Here, a name has type $\CH(T)$, if it is a channel that can be used to transmit names of type $T$, so in that case we would therefore require that $\CH(T) \looparrowleft T$.

In our definition of compatibility, we assume given a subtype ordering $\subt$ on types. 
If $T_1 \subt T_2$, then a term of $T_1$ can be used wherever a term of type $T_2$ is needed.
Thus we require the usual subsumption rule for types, namely that a term of a given type $T_1$ can also be typed with a supertype $T_2$:
\begin{center}
\begin{semantics}
  \RULE[subsume][T-Sub] 
    { \ENV, \PSI \vdash M : T_1 \AND T_1 \subt T_2 }
    { \ENV, \PSI \vdash M : T_2 }
\end{semantics} 
\end{center}

\noindent The compatibility predicate for a type $T$ must further satisfy the following requirements w.r.t.\@ the subtyping relation:
\begin{enumerate}
\item If a channel type can carry two distinct types, then the types have to be related by the subtype ordering. 
  That is, if $d \in  \{+,-\}$, $T \coma[d] T_1$ and $T \coma[d] T_2$ with $T_1 \neq T_2$, then $T_1 \subt T_2$ or $T_2 \subt T_1$.

\item Output compatibility is \emph{contravariant}. 
  That is, if $T \coma[+] T_2$ and $T_1 \subt T_2$, then also $T \coma[+] T_1$. 
  This requirement mirrors that of \cite{PICAPABLE}. 
  If $T_1 \subt T_2$, then a term of type $T_1$ can be used where ever a term of type $T_2$ is needed, and a channel that outputs terms of the more general type $T_2$ can therefore be used, where ever a channel of the specialized type $T_1$ is required.

\item Input compatibility is \emph{covariant}. 
  That is, if $T \coma[-] T_1$ and $T_1 \subt T_2$, then also $T \coma[-] T_2$. 
  This requirement, too, mirrors that of \cite{PICAPABLE}. 
  Here, if $T_1 \subt T_2$, a channel that accepts terms of type $T_1$ can also be used to accept terms of type $T_2$.
\end{enumerate}

\subsection{Instance assumptions}\label{sec:parameters-and-assumptions}
In order to ensure soundness, we introduce a collection of assumptions, given in Figure~\ref{fig:instance-assumptions}, that must hold for an instance of the generic type system to be valid. 
They pertain to the type judgments $\ENV, \PSI \BOM \TJUDG$ for terms, conditions and assertions, which, as previously mentioned, we cannot specify in advance, but on which we must nevertheless impose certain restrictions to allow us to prove subject reduction for the generic type system.
Specifically, the assumptions will guarantee that the properties of weakening and strengthening and the substitution lemma will hold for any instance that satisfies these assumptions.

\begin{figure}
\begin{align*}
  \RULENAME[t-env-weak]{t-env-weak} ~ & 
    \ENV, \PSI \BOM \TJUDG \implies \ENV, x : T, \PSI \BOM \TJUDG\\
  \RULENAME[t-env-strength]{t-env-strength} ~ &
    \ENV, x : T, \PSI \BOM \TJUDG
    \land x \not\in \supn(\TJUDG)
    \implies \ENV, \PSI \BOM \TJUDG\\
  \RULENAME[t-comp-term]{t-comp-term} ~ &
    \ENV, \PSI \BOM M\REPLACE{\VEC{x}}{\VEC{L}} : F(\VEC{T})
    \implies \ENV, \PSI \BOM \VEC{L} : \VEC{T}\\
  \RULENAME[t-ass-weak]{t-ass-weak} ~ &
      \ENV, \PSI \BOM \TJUDG
      \land \PSI \le \PSI'
      \land \supn(\PSI') \subseteq \dom(\ENV)
      \implies \ENV, \PSI' \BOM \TJUDG\\
  \RULENAME[t-weak-chaneq]{t-weak-chaneq} ~ &
    \PSI \ENTAILS M_1 \CHANEQ M_2 \implies \PSI \COMPOSE \PSI' \ENTAILS M_1 \CHANEQ M_2\\
  \RULENAME[t-subs]{t-subs} ~ &
      \ENV, \PSI \BOM \VEC{L} : \VEC{T} \land
      \ENV, \VEC{x} : \VEC{T}, \PSI \BOM \TJUDG
    \implies   \ENV, \PSI \BOM \TJUDG\REPLACE{\VEC{x}}{\VEC{L}}\\
  \RULENAME[t-equal]{t-equal} ~ &
      \ENV, \PSI \BOM M : T \land
      \PSI \ENTAILS M \CHANEQ N
    \implies   \ENV, \PSI \BOM N : T\\
  \RULENAME[t-env-claus]{t-env-claus} ~ &
      \ENV, \PSI \BOM M : T \land T \curvearrowleft \ENV'
      \implies \dom(\ENV) \subseteq \dom(\ENV') \\
  \RULENAME[t-weak-ass-claus]{t-weak-ass-claus} ~ &
    \PSI \ENTAILS M \HANDLE P
    \land \ENV, \PSI \BOM M \HANDLE P
    \land \PSI \le \PSI'
    \land \supn(\PSI) \subseteq \ENV \implies \PSI' \ENTAILS M \HANDLE P\\
  \RULENAME[t-subs-run]{t-subs-run} ~ & \ENV, \PSI \BOM M : T \land T \curvearrowleft \ENV'
  \land \PSI \ENTAILS M\REPLACE{\VEC{x}}{\VEC{L}} \HANDLE P \implies \ENV', \PSI \BOM P
\end{align*}
\caption{Instance assumptions for the generic type system.}
\label{fig:instance-assumptions}
\end{figure}

Firstly, we require every instance of our generic type system to satisfy certain natural requirements about the use of type environments $\ENV$; these are similar to those of Hüttel in \cite{huttel2011typed}.
The assumptions \nameref{t-env-weak}, \nameref{t-env-strength}, \nameref{t-comp-term} and \nameref{t-ass-weak} are the usual requirements of weakening and strengthening; these must hold for type environments as well as for assertions. 
\nameref{t-weak-chaneq} tells us that channel equivalence is closed under weakening of assertions.
The assumptions \nameref{t-subs} and \nameref{t-equal} tell us that typability must be invariant under substitution and channel equivalence.

Other assumptions are particular to the higher-order setting and thus new.
Here, one particularly important question is \emph{which} type environment $\ENV$ a process $P$ should be typed in relation to, if $P$ is transmitted using the higher-order process mobility construct, with $M$ as a handle for $P$.
To solve this, we write $T \curvearrowleft \ENV$ to express that if $M$ is a handle for some process $P$ and has type $T$, then we can \emph{extract} the type environment $\ENV$ for typing $P$ from the type $T$ of the handle $M$.
This is thus another requirement we impose on how the set of types must be defined.
As a simple example, suppose that every type of a term would consist of a channel component and a run type component $(T, \ENV)$; then we could define the $\curvearrowleft$ relation to be $(T, \ENV) \curvearrowleft \ENV$.

The new assumptions are as follows:
\begin{itemize}
  \item The assumption \nameref{t-env-claus} tells us that that the type environment extracted from the type of a handle $M$ must mention at least the free names of $M$.
    
  \item The assumption \nameref{t-weak-ass-claus} is necessary to prove weakening of assertion environments; i.e.\@ by allowing unused assertions to be added. 
  It states that if $M$ is a handle for $P$, then $M$ must still remain a handle for the same process $P$ if the assertion environment is weakened.

  \item The assumption \nameref{t-subs-run} is needed to ensure that typability is preserved by substitution also in the higher-order case.
  It states that if a term $M$ becomes a handle for a new process $P$ after a substitution, then the new process must still be well-typed in the environment we obtain from $M$'s type $T$.
\end{itemize}

\subsection{Type rules for processes}
Unlike the aforementioned type rules for terms, conditions and assertions, the type rules for processes are common to every instance.
As in \cite{huttel2011typed}, we only consider type judgements that are \emph{well-formed}; that is, if $\SUPPORT{\PSI} \UNION \SUPPORT{P} \subseteq \dom(\ENV)$, so every name mentioned in the term or process in the judgement is bound in the type environment.
The rules are given in Figure~\ref{fig:typerules}; they are mostly similar to those of \cite{huttel2011typed}, except for the rule \nameref{t-run} used to type the $\RUN{M}$ construct, which is the only construct that is new to the higher-order setting.

\begin{figure}\centering
\begin{semantics}
  \RULE[t-in][t-in]{
    \begin{aligned}
                                          & T \looparrowleft T'\\
      \ENV, \PSI                     \BOM & M : T\\
      \ENV, \VEC{x} : \VEC{T}, \PSI  \BOM & N : T'\\
      \ENV, \VEC{x} : \VEC{T}, \PSI  \BOM & P\\
    \end{aligned}
  }{ \ENV, \PSI \BOM \INPUT{M}{\VEC{x}:\VEC{T}}{N}.P }
\end{semantics}
\begin{semantics}
  \RULE[t-run][t-run]{
    \begin{aligned}
                           & T \curvearrowleft \ENV'\\
      \ENV, \PSI  \BOM     & M : T\\
      \PSI        \ENTAILS & M \HANDLE P\\
      \ENV', \PSI \BOM     & P
    \end{aligned}
  }{ \ENV, \PSI \BOM \RUN{M} } 
\end{semantics}
\begin{semantics}
  \RULE[t-out][t-out] {
    \begin{aligned}
                      & T \looparrowleft T'\\
      \ENV, \PSI \BOM & M : T  \\
      \ENV, \PSI \BOM & N : T' \\
      \ENV, \PSI \BOM & P
    \end{aligned}
  }{ \ENV, \PSI \BOM \OUTPUT{M}{N}.P } 
\end{semantics}

\begin{semantics}
  \RULE[t-par][t-par](
   \begin{aligned}
      \FRESH{\FRAME[\nu]{P}}{\PSI, \FRAME[\nu]{Q}, Q}\\
      \FRESH{\FRAME[\nu]{Q}}{\PSI, \FRAME[\nu]{P}, P}
    \end{aligned}
  ){
    \ENV, \FRAME[\nu]{Q}, \PSI \COMPOSE \FRAME[\PSI]{Q} \BOM P \AND
    \ENV, \FRAME[\nu]{P}, \PSI \COMPOSE \FRAME[\PSI]{P} \BOM Q
  }{ \ENV, \PSI \BOM P \PAR Q }
\end{semantics}

\begin{semantics}
  \RULE[t-new][t-new](x \# \PSI)
    { \ENV, x : T, \PSI \BOM P }
    { \ENV, \PSI \BOM \NEW{x : T}P } 
\end{semantics}
\begin{semantics}
  \RULE[t-nil][t-nil]
    { }
    { \ENV, \PSI \BOM \NIL }
\end{semantics}
\begin{semantics}
  \RULE[t-repl][t-repl]
    { \ENV, \PSI \BOM P }
    { \ENV, \PSI \BOM \REPL{P} }
\end{semantics}

\begin{semantics}
  \RULE[t-case][t-case]
    { \ENV, \PSI \BOM \PHI_i \AND \ENV, \PSI \BOM P_i }
    { \ENV, \PSI \BOM \CASE{\VEC{\PHI} : \VEC{P}} } 
\end{semantics}
\begin{semantics}
  \RULE[t-assert][t-assert]
    { \ENV, \PSI \BOM \PSI' }
    { \ENV, \PSI \BOM \ASSERT{\PSI'} }
\end{semantics}
\caption{Type judgements for processes}
\label{fig:typerules}
\end{figure}

We shall comment on the rules in some detail:
In the rule for input, \nameref{t-in}, the subject $M$ must have type $T$, which must be compatible with the type $T'$ according to the aforementioned compatibility relation $\coma$.
The pattern $N$ must then have this type $T'$, given the assumptions that the variables $\VEC{x}$ have types $\VEC{T}$, and lastly, the continuation $P$ must be well-typed given these assumptions.
The output rule, \nameref{t-out} then mirrors the input rule as usual.
In both cases, the type judgment $\ENV, \PSI \BOM M : T$ appears in the premise, and as previously mentioned, the rules for this judgment must be provided as part of the instantiation.

In the rule \nameref{t-par} we require that for a parallel composition $P \PAR Q$ to be typable, $P$ and $Q$ must both be typable within type environments and assertions that add information extracted from the other component; thus we here overload the function $\FRAME[\nu]{\cdot}$ for $\NEW{x:T}P$ to mean $\FRAME[\nu]{\NEW{x:T}P} \DEFSYM x:T, \FRAME[\nu]{P}$.
This is a natural requirement, since $P$ can, among other things, mention handles for processes established in $Q$.
The side condition then asserts that all new names declared in $P$, using the $\NEW{x:T}$ construct, must be fresh for $\PSI$ \emph{and} both the free and new names occurring in $Q$, and vice versa for $Q$, similar to the side conditions for the \nameref{e-par} and \nameref{r-par} rules in the semantics.

Likewise in the rule \nameref{t-new}, we require that the new name $x$ must be fresh for $\PSI$, again mirroring the side conditions in the corresponding semantic rules \nameref{e-res} and \nameref{r-res}, and $P$ must then be well-typed given the assumption that $x$ has type $T$.

The rules for the nil process and replication, \nameref{t-nil} and \nameref{t-repl} are as usual, and the rule for $\CASE{\VEC{\PHI} : \VEC{P}}$ is also quite straightforward.
Here, we write $\ENV, \PSI \BOM \PHI_i$ and $\ENV, \PSI \BOM P_i$ to say that every condition $\PHI_i$ in the list of conditions $\VEC{\PHI}$, and every process $P_i$ in the list of processes $\VEC{P}$, must be well-typed w.r.t.\@ the same $\ENV$ and $\PSI$.
As in the cases for input and output above, the rules for concluding $\ENV, \PSI \BOM \PHI_i$ must be provided as part of the instantiation; and likewise for concluding $\ENV, \PSI \BOM \PSI'$, which appears in the premise of the \nameref{t-assert} rule.

Lastly, since a key motivation for the present type system is the ability to type higher-order behaviour, we must be able to describe what can happen when a handle $M \HANDLE P$ is released by a $\RUN{M}$.
This is handled by the rule \nameref{t-run}, which states that $\RUN{M}$ is well-typed for $\ENV$ and $\PSI$ if $M$ is a handle for $P$ in $\PSI$ and $P$ is well-typed in the environment $\ENV'$ extracted from $M$, using the aforementioned $\curvearrowleft$ relation. 

\section{Properties of the generic type system}
Type systems normally ensure two properties of well-typed programs: a \emph{subject reduction} property guarantees that a well-typed program remains well-typed under reduction; and a \emph{safety} property ensures that if a program is well-typed then a certain safety predicate holds.
The latter will depend on the particular instance of the type system and must therefore be shown individually, for each instance, but subject reduction can be shown for the generic type system.
We establish this through a series of lemmas, beginning with the usual results of \emph{weakening} and \emph{strengthening} of the type environment:
\begin{lemma}[Weakening and strengthening]
~
\begin{itemize}
\item  If $\ENV, \PSI \BOM P$ then $\ENV, x : T, \PSI \BOM P$ 
\item  If $\ENV, x : T, \PSI \BOM P$ and $\FRESH{x}{P, \PSI}$ then
  $\ENV, \PSI \BOM P$
  \end{itemize}
\end{lemma}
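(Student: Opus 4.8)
The plan is to prove both statements by induction on the structure of the typing derivation of the hypothesis --- $\ENV, \PSI \BOM P$ in the weakening case and $\ENV, x : T, \PSI \BOM P$ in the strengthening case --- which, since the process rules of Figure~\ref{fig:typerules} are syntax-directed, amounts to induction on the structure of $P$. The point of the instance assumptions \nameref{t-env-weak} and \nameref{t-env-strength} is precisely to supply these two properties for the leaf judgements $\ENV, \PSI \BOM \TJUDG$ on terms, conditions and assertions, so the real work is to lift them through the process constructs. Throughout, I would use that well-formedness of the judgement ($\SUPPORT{\PSI} \cup \SUPPORT{P} \subseteq \dom(\ENV)$) is preserved, and that it furnishes, in the weakening direction, $x \notin \SUPPORT{P} \cup \SUPPORT{\PSI}$, whereas in the strengthening direction the freshness $\FRESH{x}{P,\PSI}$ is assumed outright.

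The base case \nameref{t-nil} is immediate. For \nameref{t-repl}, \nameref{t-new}, \nameref{t-case} and \nameref{t-par} I would apply the induction hypothesis to each sub-process and re-apply the same rule, alpha-converting the bound and frame names (those collected by $\FRAME[\nu]{\cdot}$) away from $x$ and using that $\ENV$ is a partial function, so the order of bindings is immaterial; the freshness side conditions of \nameref{t-new} and \nameref{t-par} survive untouched because we modify only the environment, not $\PSI$ or the frames, and $x$ is fresh for those frames by well-formedness. For \nameref{t-in} and \nameref{t-out} I would weaken or strengthen the subject premise $\ENV, \PSI \BOM M : T$ and the object premise directly via \nameref{t-env-weak} or \nameref{t-env-strength}, and apply the induction hypothesis to the continuation (for \nameref{t-in}, under the pattern-extended environment $\ENV, \VEC{x} : \VEC{T}$). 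The \nameref{t-run} case in the weakening direction is in fact the easiest: since the body $P$ is typed in the \emph{extracted} environment $\ENV'$ obtained from $T \curvearrowleft \ENV'$, which does not depend on the outer environment, adding $x : T$ affects only the handle premise $\ENV, \PSI \BOM M : T$, dispatched by \nameref{t-env-weak}, while $T \curvearrowleft \ENV'$, $\PSI \ENTAILS M \HANDLE P$ and $\ENV', \PSI \BOM P$ are carried over verbatim.

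The main difficulty I anticipate is the \nameref{t-run} case in the \emph{strengthening} direction. To rebuild the derivation without $x : T$ I must strengthen the handle premise $\ENV, x : T, \PSI \BOM M : T_M$ to $\ENV, \PSI \BOM M : T_M$, and \nameref{t-env-strength} demands $x \notin \SUPPORT{M : T_M}$. Freshness of $x$ for $\RUN{M}$ gives $x \notin \SUPPORT{M}$, but I also need $x \notin \SUPPORT{T_M}$, and here the assumptions seem to pull in opposite directions: by \nameref{t-env-claus}, $T_M \curvearrowleft \ENV'$ forces $\dom(\ENV) \cup \SET{x} \subseteq \dom(\ENV')$, so the very name $x$ I am trying to discard lies in the domain of the environment extracted from $T_M$. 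Closing this case therefore hinges on the precise reading of $\SUPPORT{\cdot}$ on type judgements and of the $\curvearrowleft$ relation --- in particular, on whether a name occurring only in the extracted-environment component of a type is counted toward that type's support --- and this is the step I expect to require the most care. The remaining cases are routine once the leaf assumptions are in hand.
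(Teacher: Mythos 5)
Your overall strategy---induction on the typing derivation, with the process rules of Figure~\ref{fig:typerules} handled structurally and the instance assumptions \nameref{t-env-weak} and \nameref{t-env-strength} discharging the leaf judgements $\ENV, \PSI \BOM \TJUDG$ for terms, conditions and assertions---is exactly the proof the paper intends: the paper states the lemma without an inline proof (deferring details to its technical report), but says explicitly that these assumptions exist precisely so that weakening and strengthening go through. Your weakening direction is complete and correct, including the observation that in \nameref{t-run} every premise except $\ENV, \PSI \BOM M : T$ never mentions the outer environment and so carries over verbatim.

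The strengthening direction, however, is left with a hole: you diagnose the \nameref{t-run} case accurately and then stop without closing it, and a proof with an explicitly open case is incomplete. Your diagnosis is in fact sharp. If $\supn(M : T_M)$ is read as $\supn(M) \cup \supn(T_M)$, then \nameref{t-env-claus} forces $x \in \dom(\ENV')$ for the extracted environment, and in every instance the paper considers the extracted environment is a literal component of the type, so $x \in \supn(T_M)$ and \nameref{t-env-strength} is inapplicable; since no other assumption can remove a binding, under this reading the lemma itself, not merely your proof, would fail. Note also that the same difficulty, though not forced by any assumption, already threatens the cases you called routine: in \nameref{t-in} and \nameref{t-out} nothing prevents a derivation from choosing subject and object types $T, T'$ whose support contains $x$. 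The resolution you should state and commit to is the other reading, which the paper's own conventions support: well-formedness of judgements counts only $\supn(\PSI) \cup \supn(P)$, never the types, and the paper's instances validate \nameref{t-env-strength} with $\supn(\TJUDG)$ taken as the support of the subject (their term rules consult $\ENV$ only to look up the subject, and type the carried process in the environment stored inside the type or assertion, not in $\ENV$). With that reading fixed, all three problematic cases close uniformly: freshness of $x$ for the process gives the hypothesis of \nameref{t-env-strength} for each term judgement, the remaining premises of \nameref{t-run} ($T \curvearrowleft \ENV'$, the handle entailment from $\PSI$, and $\ENV', \PSI \BOM P$) are independent of $\ENV$ and carry over unchanged, and one re-applies the rule.
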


A similar result holds for assertions. 
Any process that is well-typed remains well-typed after a composition of any assertion in the assertion environment, so long as all names in the new assertion environment are in the support of the type environment: 
\begin{lemma}[Assertion environment weakening]\label{lemma:type-aew}
  If $\ENV, \PSI \BOM P$, $\names(\PSI') \subseteq \dom(\ENV)$ and $\PSI \leq \PSI'$ then $\ENV, \PSI' \BOM P$.
\end{lemma}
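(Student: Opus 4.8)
The plan is to prove the statement by induction on the structure of the typing derivation of $\ENV, \PSI \BOM P$, i.e.\ by a case analysis on the last process rule applied (Figure~\ref{fig:typerules}). In every case the strategy is the same: the premises concerning terms, conditions and assertions, which are judgements of the form $\ENV, \PSI \BOM \TJUDG$, are re-established for $\PSI'$ directly by the instance assumption \nameref{t-ass-weak}, whose side conditions $\PSI \leq \PSI'$ and $\names(\PSI') \subseteq \dom(\ENV)$ are exactly the hypotheses of the lemma; whereas the premises concerning processes are re-established by the induction hypothesis. Having weakened all premises from $\PSI$ to $\PSI'$, I simply reapply the same process rule to recover $\ENV, \PSI' \BOM P$.

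For the rules \nameref{t-nil}, \nameref{t-out}, \nameref{t-repl}, \nameref{t-case} and \nameref{t-assert} this is immediate, the only points to check being that each $\TJUDG$-premise is handled by \nameref{t-ass-weak} and each process premise by the induction hypothesis. The rules \nameref{t-in} and \nameref{t-new} extend the environment (to $\ENV, \VEC{x}:\VEC{T}$ and $\ENV, x:T$ respectively); here I note that $\names(\PSI') \subseteq \dom(\ENV) \subseteq \dom(\ENV, \VEC{x}:\VEC{T})$, so the hypothesis of the lemma still holds in the larger environment and the induction hypothesis applies. In \nameref{t-new} the side condition $x \# \PSI$ must be re-established as $x \# \PSI'$; since $\names(\PSI') \subseteq \dom(\ENV)$ and $x \notin \dom(\ENV)$ this holds, and where necessary $x$ may be chosen fresh for $\PSI'$ by $\alpha$-conversion.

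The rule \nameref{t-par} requires more bookkeeping, because its premises weaken both the environment, by $\FRAME[\nu]{\cdot}$, and the assertion, by $\FRAME[\PSI]{\cdot}$. To re-derive $\ENV, \FRAME[\nu]{Q}, \PSI' \COMPOSE \FRAME[\PSI]{Q} \BOM P$ from $\ENV, \FRAME[\nu]{Q}, \PSI \COMPOSE \FRAME[\PSI]{Q} \BOM P$, I use that $\COMPOSE$ is a commutative monoid to rewrite $\PSI' \COMPOSE \FRAME[\PSI]{Q} = (\PSI \COMPOSE \FRAME[\PSI]{Q}) \COMPOSE \PSI''$, where $\PSI''$ witnesses $\PSI \leq \PSI'$; together with the support inclusion carried by $\PSI \leq \PSI'$ this yields $\PSI \COMPOSE \FRAME[\PSI]{Q} \leq \PSI' \COMPOSE \FRAME[\PSI]{Q}$, while $\names(\PSI' \COMPOSE \FRAME[\PSI]{Q}) \subseteq \dom(\ENV, \FRAME[\nu]{Q})$ follows from $\names(\PSI') \subseteq \dom(\ENV)$ and the well-formedness of the original judgement. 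The induction hypothesis then applies, and symmetrically for $Q$. The side conditions of \nameref{t-par} are unchanged except that the restricted names $\FRAME[\nu]{P}, \FRAME[\nu]{Q}$ must now also be fresh for $\PSI'$; as these are bound, they can be renamed to avoid the finite set $\names(\PSI') \subseteq \dom(\ENV)$ by $\alpha$-conversion.

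I expect the genuinely higher-order rule \nameref{t-run} to be the main obstacle, since it is the case in which the new instance assumptions must cooperate. Its conclusion $\ENV, \PSI' \BOM \RUN{M}$ needs four premises re-established for $\PSI'$: the judgement $\ENV, \PSI' \BOM M : T$ comes from \nameref{t-ass-weak}; the entailment $\PSI' \ENTAILS M \HANDLE P$ comes from \nameref{t-weak-ass-claus}; the compatibility $T \curvearrowleft \ENV'$ is unaffected by the assertion; and, crucially, to obtain $\ENV', \PSI' \BOM P$ from $\ENV', \PSI \BOM P$ by the induction hypothesis, I first invoke \nameref{t-env-claus} with $\ENV, \PSI \BOM M : T$ and $T \curvearrowleft \ENV'$ to get $\dom(\ENV) \subseteq \dom(\ENV')$, whence $\names(\PSI') \subseteq \dom(\ENV) \subseteq \dom(\ENV')$, so the lemma's hypothesis holds in $\ENV'$ and the induction hypothesis applies; reapplying \nameref{t-run} then closes the case. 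The most delicate point, I expect, is justifying the side premise $\ENV, \PSI \BOM M \HANDLE P$ demanded by \nameref{t-weak-ass-claus}, which is not literally among the premises of \nameref{t-run} and must be recovered from the ambient well-formedness of the asserted handle; once this is in place, the case is just the combined use of \nameref{t-env-claus}, \nameref{t-weak-ass-claus}, \nameref{t-ass-weak} and the induction hypothesis, and everything else reduces to weakening of $\TJUDG$-judgements.
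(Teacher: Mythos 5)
Your proposal takes essentially the same route as the paper's proof: induction on the typing derivation, with \nameref{t-ass-weak} discharging the term/condition/assertion premises and \nameref{t-env-claus} together with \nameref{t-weak-ass-claus} cooperating in the \nameref{t-run} case --- precisely the three instance assumptions the paper singles out as the ones needed for this lemma. The delicate point you flag, namely obtaining the premise $\ENV, \PSI \BOM M \HANDLE P$ required by \nameref{t-weak-ass-claus} when the rule \nameref{t-run} does not itself carry that premise, is a genuine subtlety of the paper's formulation as well; the published text does not resolve it, deferring the detailed proof to the technical report.
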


This lemma is necessitated by the syntax of the \HOPSI-calculus itself, which allows guarded assertions in continuations to become unguarded after a reduction. 
It is in the proof of this result that the instance assumptions \nameref{t-ass-weak}, \nameref{t-env-claus} and \nameref{t-weak-ass-claus} become important.

As we here use reduction semantics with an asymmetric evaluation relation to handle unfolding of \textbf{case} and \textbf{run} expressions, we shall also need two results that describe how frames can evolve during evaluation. 
The former, given in Lemma~\ref{lemma:frame-post-reduction} is used in the proof of subject reduction (Theorem~\ref{theorem:subject-reduction}) to find any new assertions that may have become composed onto the pre-existing assertion environment after a reduction.
The latter, given in  Lemma~\ref{lemma:frame-post-eval} states that the assertions in a process are unaltered by an evaluation: 
This is mainly ensured by the criterion for well-formed processes, asserting that all processes under replication or in a \textbf{case} expression, and all processes spawned by a $\RUN{M}$ operator, may not contain unguarded assertions.
The proof then establishes that the property of being assertion-guarded is preserved by the evaluation relation $\EVALTO$.
\begin{lemma}[Frame post reduction]\label{lemma:frame-post-reduction}
  If $\PSI \ENVSEP P \trans P'$ then $\FRAME[\PSI]{P} \leq \FRAME[\PSI]{P'}$
\end{lemma}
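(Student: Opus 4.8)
The plan is to argue by induction on the derivation of $\PSI \ENVSEP P \trans P'$, with one case for each of the four reduction rules \nameref{r-com}, \nameref{r-res}, \nameref{r-par} and \nameref{r-eval}. Throughout I rely on three standing facts. First, the value of the assertion frame $\FRAME[\PSI]{P}$ is independent of the subscript $\PSI$, which merely distinguishes it from the name frame $\FRAME[\nu]{P}$; consequently the induction hypothesis may be applied to any sub-derivation, whatever assertion that reduction is carried out under. Second, $\leq$ is a preorder and $\ASSUNIT$ is its least element, since $\ASSUNIT$ is the unit of $\COMPOSE$ and has empty support. Third, $\COMPOSE$ is an abelian monoid operation.

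The three congruence- and evaluation-driven cases are routine. For \nameref{r-com}, the process $P$ is a parallel composition of an output- and an input-prefixed process; both prefixes fall under the default frame clause, so $\FRAME[\PSI]{P} = \ASSUNIT$, and since $\ASSUNIT$ is the least element we obtain $\ASSUNIT \leq \FRAME[\PSI]{P'}$ regardless of which guarded assertions the reduct exposes in the continuations. For \nameref{r-res}, where $P = \NEW{x:T}P_1$ and $P' = \NEW{x:T}P_1'$, the clause $\FRAME[\PSI]{\NEW{x:T}R} = \FRAME[\PSI]{R}$ makes the goal coincide exactly with the induction hypothesis on $\PSI \ENVSEP P_1 \trans P_1'$. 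For \nameref{r-eval}, the derivation pairs $\PSI \ENVSEP P \EVALTO Q$ with $\PSI \ENVSEP Q \trans P'$; here I invoke the companion Lemma~\ref{lemma:frame-post-eval}, which (by the well-formedness restriction forbidding unguarded assertions under replication, \textbf{case} and \textbf{run}) guarantees that evaluation leaves the frame unchanged, so $\FRAME[\PSI]{P} = \FRAME[\PSI]{Q}$, and combining this with $\FRAME[\PSI]{Q} \leq \FRAME[\PSI]{P'}$ from the induction hypothesis and transitivity of $\leq$ closes the case.

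The case \nameref{r-par} is where the real work lies. Here $P = P_1 \PAR Q$ and $P' = P_1' \PAR Q$, with the premise reduction performed under $\PSI \COMPOSE \FRAME[\PSI]{Q}$; by subscript-independence the induction hypothesis yields $\FRAME[\PSI]{P_1} \leq \FRAME[\PSI]{P_1'}$, and the product clause reduces the goal to $\FRAME[\PSI]{P_1} \COMPOSE \FRAME[\PSI]{Q} \leq \FRAME[\PSI]{P_1'} \COMPOSE \FRAME[\PSI]{Q}$. The existence-of-witness half is immediate: from $\FRAME[\PSI]{P_1'} = \FRAME[\PSI]{P_1} \COMPOSE \PSI''$, commutativity and associativity give $\FRAME[\PSI]{P_1'} \COMPOSE \FRAME[\PSI]{Q} = (\FRAME[\PSI]{P_1} \COMPOSE \FRAME[\PSI]{Q}) \COMPOSE \PSI''$. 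The subtle half is the support condition built into $\leq$: I must show $\supn(\FRAME[\PSI]{P_1} \COMPOSE \FRAME[\PSI]{Q}) \subseteq \supn(\FRAME[\PSI]{P_1'} \COMPOSE \FRAME[\PSI]{Q})$, and this is precisely the step I expect to be the main obstacle, as it hinges on $\COMPOSE$ not discarding the names in the supports of its operands. I would therefore isolate an auxiliary monotonicity lemma — that $\leq$ is a precongruence for $\COMPOSE$ — and prove it once from the monoid laws together with $\supn(\PSI_1 \COMPOSE \PSI_2) = \supn(\PSI_1) \cup \supn(\PSI_2)$, after which the required inclusion follows at once from $\supn(\FRAME[\PSI]{P_1}) \subseteq \supn(\FRAME[\PSI]{P_1'})$, completing this case and hence the induction.
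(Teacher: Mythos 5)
Your induction has the right architecture, and it is surely the same one the paper intends (the paper defers details to its technical report, but it states Lemma~\ref{lemma:frame-post-eval} exactly so that it can be plugged into the \nameref{r-eval} case, as you do): case analysis on the four reduction rules, the frame clauses closing \nameref{r-com} and \nameref{r-res}, Lemma~\ref{lemma:frame-post-eval} plus the induction hypothesis closing \nameref{r-eval}, and your observation that $\FRAME[\PSI]{\cdot}$ is independent of its subscript is precisely what licenses applying the induction hypothesis to the premise of \nameref{r-par}, which reduces under $\PSI \COMPOSE \FRAME[\PSI]{Q}$ rather than $\PSI$. All of that is sound.

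The gap is exactly where you place the weight: the support half of \nameref{r-par}. You propose to prove that $\leq$ is a precongruence for $\COMPOSE$ ``from the monoid laws together with $\supn(\PSI_1 \COMPOSE \PSI_2) = \supn(\PSI_1) \UNION \supn(\PSI_2)$'', but that identity is not available to you: it appears neither among the requirements on \HOPSI{} parameters in Section~\ref{sec:hopsi} nor among the instance assumptions of Figure~\ref{fig:instance-assumptions}, and it does not follow from the monoid laws. Equivariance of $\COMPOSE$ yields only $\supn(\PSI_1 \COMPOSE \PSI_2) \subseteq \supn(\PSI_1) \UNION \supn(\PSI_2)$; the converse direction --- the statement that composition cannot discard names, which is what your inclusion actually needs --- can fail. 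Concretely, let assertions be finitely supported maps from names to the integers, with $\COMPOSE$ pointwise addition and $\ASSUNIT$ the zero map: this is a commutative monoid with unit, yet if $\FRAME[\PSI]{P_1}$ sends $a$ to $1$, the reduction releases an assertion sending $b$ to $1$ (so the induction hypothesis holds for $P_1$ and $P_1'$), and $\FRAME[\PSI]{Q}$ sends both $a$ and $b$ to $-1$, then $\supn(\FRAME[\PSI]{P_1} \COMPOSE \FRAME[\PSI]{Q}) = \SET{b}$ while $\supn(\FRAME[\PSI]{P_1'} \COMPOSE \FRAME[\PSI]{Q}) = \EMPTYSET$. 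So the monotonicity lemma you want cannot be proved once and for all from the generic laws; it must either be added as an explicit assumption on instances (it does hold for every instance considered in the paper, where $\COMPOSE$ is set union) or obtained from some other property you would have to state. As written, the case you yourself identify as the crux is not closed. (A minor, separate point: the monoid laws of \PSI-calculi hold only up to assertion equivalence rather than equality; since the paper's own definition of $\leq$ already uses equality, your appeals to associativity and commutativity are no worse than the paper's, and I would not count them against you.)
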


\begin{lemma}[Frame post evaluation]
  \label{lemma:frame-post-eval}
  If $\PSI \ENVSEP P \EVALTO P'$ then $\FRAME[\PSI]{P} = \FRAME[\PSI]{P'}$.
\end{lemma}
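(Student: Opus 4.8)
The plan is to proceed by induction on the derivation of $\PSI \ENVSEP P \EVALTO P'$, with a case analysis on the last rule applied, after first isolating two auxiliary observations. The first is a \emph{triviality lemma}: if a process $R$ contains no unguarded assertions, then $\FRAME[\PSI]{R} = \ASSUNIT$. This follows by a straightforward structural induction on $R$, since the only clause of $\FRAME[\PSI]{\cdot}$ producing something other than $\ASSUNIT$ is the one for $\ASSERT{\PSI}$, and the remaining clauses either pass the frame through unchanged ($\NEW{x:T}P$) or compose subframes using $\COMPOSE$, which is associative and commutative with unit $\ASSUNIT$. The second is a \emph{congruence lemma}: if $P \STRUCTEQ_S P'$ then $\FRAME[\PSI]{P} = \FRAME[\PSI]{P'}$, proven by induction on the derivation of $\STRUCTEQ_S$. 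The generating rules are checked directly: \ALPHA-equivalence leaves the collected assertions unchanged since the frame is equivariant, the commutative-monoidal laws for $\PAR$ are matched by those for $\COMPOSE$, and for scope extrusion both $\FRAME[\PSI]{\NEW{x:T}P \PAR Q}$ and $\FRAME[\PSI]{\NEW{x:T}\PAREN{P \PAR Q}}$ reduce to $\FRAME[\PSI]{P} \COMPOSE \FRAME[\PSI]{Q}$; the congruence closure is then handled by the inductive hypothesis together with the defining clauses of $\FRAME[\PSI]{\cdot}$.

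With these in hand the main induction is largely routine. The rule \nameref{e-struct} is exactly the congruence lemma. The two compositional rules \nameref{e-res} and \nameref{e-par} follow from the induction hypothesis and the corresponding clauses of the frame function: for \nameref{e-res} we get $\FRAME[\PSI]{\NEW{x:T}P} = \FRAME[\PSI]{P} = \FRAME[\PSI]{P'} = \FRAME[\PSI]{\NEW{x:T}P'}$, and for \nameref{e-par} we get $\FRAME[\PSI]{P \PAR Q} = \FRAME[\PSI]{P} \COMPOSE \FRAME[\PSI]{Q} = \FRAME[\PSI]{P'} \COMPOSE \FRAME[\PSI]{Q} = \FRAME[\PSI]{P' \PAR Q}$, the middle equality being the induction hypothesis. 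Note that the assertion frame does not depend on the assertion standing to the left of $\ENVSEP$, so the shift from $\PSI$ to $\PSI \COMPOSE \FRAME[\PSI]{Q}$ in the premise of \nameref{e-par} is harmless.

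The three remaining rules, \nameref{e-case}, \nameref{e-run} and \nameref{e-rep}, are where the well-formedness criterion does the real work, and this is the step I expect to be the crux. In each case the redex is a \textbf{case} expression, a $\RUN{M}$, or a $\REPL{P}$, all of which fall under the default clause, so the left-hand side has frame $\ASSUNIT$; the right-hand side is, respectively, a selected branch, a handled process, or $P \PAR \REPL{P}$. In every instance the newly exposed process is precisely a process sitting in a \textbf{case} expression, spawned by a $\RUN{M}$, or placed under replication, and the well-formedness criterion forbids exactly these from containing unguarded assertions. Hence, by the triviality lemma, their frame is $\ASSUNIT$, so the right-hand side also reduces to $\ASSUNIT$ (using $\ASSUNIT \COMPOSE \FRAME[\PSI]{\REPL{P}} = \ASSUNIT$ in the replication case), and both sides agree. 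The only genuine subtlety is ensuring the well-formedness hypothesis applies to exactly these exposed subprocesses; this is guaranteed because we only consider well-formed $P$, and the relevant occurrences inherit the guardedness constraint from the enclosing well-formed term.
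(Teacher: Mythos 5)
Your proof is correct and takes essentially the same route as the paper's: an induction on the evaluation derivation in which the well-formedness criterion discharges \nameref{e-case}, \nameref{e-run} and \nameref{e-rep} (via your triviality lemma that assertion-guarded processes have frame $\ASSUNIT$), while compositionality of $\FRAME[\PSI]{\cdot}$ and invariance under $\STRUCTEQ_S$ handle the remaining rules. The one loose step is justifying the \ALPHA-equivalence case of \nameref{e-struct} by ``equivariance'' --- equivariance is not \ALPHA-invariance, and $\FRAME[\PSI]{\cdot}$ taken in isolation from $\FRAME[\nu]{\cdot}$ is not literally invariant under renaming of bound names when unguarded assertions mention those names --- but this imprecision is inherited from the paper's own statement of the lemma and is resolved by reading the assertion frame together with its binders under the paper's freshness conventions.
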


The above lemmas can now be used to prove that a well-typed process remains well-typed after an evaluation:
\begin{lemma}[Subject evaluation]\label{lemma:subject-eval}
  If~ $\ENV, \PSI \BOM P \land \PSI \ENVSEP P \EVALTO P'$ then $\ENV, \PSI \BOM P'$.
\end{lemma}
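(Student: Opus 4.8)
The plan is to proceed by induction on the derivation of $\PSI \ENVSEP P \EVALTO P'$, doing case analysis on the last rule applied and, in each case, inverting the (syntax-directed) process rule of Figure~\ref{fig:typerules} that must have concluded $\ENV, \PSI \BOM P$ before re-deriving $\ENV, \PSI \BOM P'$. Three ingredients recur: inversion of the process rules; the weakening and strengthening lemma for type environments; and Lemma~\ref{lemma:frame-post-eval}, which gives $\FRAME[\PSI]{P} = \FRAME[\PSI]{P'}$ and thereby guarantees that the assertion environment seen by a parallel component is undisturbed by an evaluation step.

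The structural cases are routine. For \nameref{e-case}, inverting \nameref{t-case} yields $\ENV, \PSI \BOM P_i$ for every branch, in particular for the selected one, which is the goal. For \nameref{e-rep}, inverting \nameref{t-repl} gives $\ENV, \PSI \BOM P$, and I reassemble $\ENV, \PSI \BOM P \PAR \REPL{P}$ via \nameref{t-par} and \nameref{t-repl}, using that well-formedness forbids unguarded assertions under replication, so $\FRAME[\PSI]{\REPL{P}} = \ASSUNIT$ and the freshness side conditions hold trivially. For \nameref{e-res}, I invert \nameref{t-new} to get $\ENV, x:T, \PSI \BOM P$, apply the induction hypothesis (the side condition $\FRESH{x}\PSI$ is shared), and re-apply \nameref{t-new}. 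For \nameref{e-par}, I invert \nameref{t-par}; its left premise $\ENV, \FRAME[\nu]{Q}, \PSI \COMPOSE \FRAME[\PSI]{Q} \BOM P$ matches the premise of the \nameref{e-par} rule, so the induction hypothesis (applied with ambient environment $\ENV, \FRAME[\nu]{Q}$ and assertion $\PSI \COMPOSE \FRAME[\PSI]{Q}$) yields the same judgement for $P'$; since $Q$ is untouched and Lemma~\ref{lemma:frame-post-eval} keeps $\FRAME[\PSI]{Q}$'s view of $P$ stable, the right premise is preserved (adjusting the name frame by weakening where $P'$ exposes fresh binders), and \nameref{t-par} reassembles $P' \PAR Q$.

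The case \nameref{e-struct} needs an auxiliary lemma stating that $\STRUCTEQ_S$ preserves typing, which I would prove by induction on the generation of the congruence. The $\ALPHA$-equivalence and commutative-monoid cases follow from equivariance of the judgement and from commutativity and associativity of $\COMPOSE$ and of the frame unions; the only interesting subcase is scope extrusion \nameref{s-scope}, where moving $\NEW{x:T}$ across $\PAR$ is justified by inverting and re-applying \nameref{t-new} and \nameref{t-par}, using the side condition $\FRESH{x}Q$ together with weakening and strengthening to insert or delete $x:T$ in the environment in which $Q$ is typed.

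The main obstacle is \nameref{e-run}, where $\RUN{M} \EVALTO P$ with $\PSI \ENTAILS M \HANDLE P$. Inverting \nameref{t-run} produces a type $T$ and an \emph{extracted} environment $\ENV'$ with $T \curvearrowleft \ENV'$, $\ENV, \PSI \BOM M : T$, and crucially $\ENV', \PSI \BOM P$, so the body is typed in $\ENV'$ rather than in the ambient $\ENV$ required by the goal. Closing this gap is the delicate step: I would apply \nameref{t-env-claus} to obtain $\dom(\ENV) \subseteq \dom(\ENV')$ and then reconcile the two environments on the free names of $P$, discharging the surplus names of $\ENV'$ by strengthening once they are seen to be fresh for $P$. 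I expect this reconciliation between the environment recorded in the handle's type and the ambient environment — not any of the structural cases — to be where the real work lies, and where the higher-order–specific assumption \nameref{t-env-claus} (with \nameref{t-subs-run} playing the analogous role in the substitution-laden setting of subject reduction) is indispensable.
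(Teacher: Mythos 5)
Your overall skeleton — induction on the derivation of $\PSI \ENVSEP P \EVALTO P'$, inversion of the (syntax-directed) process rules, an auxiliary lemma that $\STRUCTEQ_S$ preserves typability for \nameref{e-struct}, weakening/strengthening for the exposed binders, and Lemma~\ref{lemma:frame-post-eval} to keep the assertion part of the frame stable in the \nameref{e-par} case — is exactly the decomposition the paper's development calls for, and your treatment of the structural cases (\nameref{e-case}, \nameref{e-rep}, \nameref{e-res}, \nameref{e-par}, \nameref{e-struct}) is sound.

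The problem is the \nameref{e-run} case, precisely where you say the real work lies: your proposed repair does not close it. First, a smaller issue: inverting \nameref{t-run} only yields $\ENV', \PSI \BOM P_0$ for \emph{some} $P_0$ with $\PSI \ENTAILS M \HANDLE P_0$; handles need not be functional, so $P_0$ need not be the process $P$ chosen by \nameref{e-run}. The assumption that repairs this is \nameref{t-subs-run} instantiated with the empty substitution, which gives $\ENV', \PSI \BOM P$ for the actual $P$ — so relegating \nameref{t-subs-run} to the substitution cases of subject reduction, as you do, is a miss. Second, and more seriously, the reconciliation of $\ENV'$ with $\ENV$ cannot be carried out with the tools you name. \nameref{t-env-claus} gives $\dom(\ENV) \subseteq \dom(\ENV')$, i.e.\@ $\ENV'$ is the \emph{larger} environment; strengthening can only delete a binding $x : T$ when $\FRESH{x}{P, \PSI}$, and the surplus names of $\ENV'$ need not be fresh for $P$, since $P$ is typed in $\ENV'$ and may use any name in its domain. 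Even granting that freshness, what strengthening yields is a judgement in the restriction of $\ENV'$ to $\dom(\ENV)$ — and no instance assumption identifies that restriction with $\ENV$: \nameref{t-env-claus} relates only domains, never the types assigned to shared names, so $\ENV'(x)$ and $\ENV(x)$ may disagree for $x \in \dom(\ENV)$, and neither weakening nor strengthening can change the type a name is assigned. So the step "reconcile the two environments on the free names of $P$" is not justified by anything you (or the stated assumptions you invoke) provide; closing this case requires the link between the ambient environment and the environment extracted via $\curvearrowleft$ to be supplied by the instance assumptions themselves (this is the role of the package \nameref{t-env-claus}, \nameref{t-subs-run} and the $\curvearrowleft$ relation play jointly), not reconstructed afterwards from freshness considerations. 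As it stands, the one case that distinguishes this lemma from its first-order predecessor is the one your proof does not actually establish.
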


Lastly, we need a standard result of \emph{substitution}, which states that a well-typed process remains well-typed after a well-typed substitution. 
The proof of this lemma requires the instance assumptions \nameref{t-subs} and \nameref{t-subs-run}.
\begin{lemma}[Subject substitution]\label{lemma:subject-subs}
  If~ $\ENV, \VEC{x} : \VEC{T}, \PSI \BOM P$ and $\ENV, \PSI \BOM \VEC{L} : \VEC{T}$ then $\ENV, \PSI \BOM P\REPLACE{x}{\VEC{L}}$.
\end{lemma}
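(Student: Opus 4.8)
The plan is to proceed by induction on the derivation of $\ENV, \VEC{x} : \VEC{T}, \PSI \BOM P$, inverting the last rule applied; since the process rules of Figure~\ref{fig:typerules} are syntax-directed, this coincides with structural induction on $P$. Before starting I would invoke the Barendregt convention to alpha-rename every name bound in $P$ (those introduced by \nameref{t-new} and by the input prefix in \nameref{t-in}) so that it is fresh for $\VEC{x}$, $\VEC{L}$, $\ENV$ and $\PSI$. This ensures the substitution $\REPLACE{\VEC{x}}{\VEC{L}}$ commutes with the binders without capture and leaves the bound names and their type annotations untouched, which is what makes the binder cases routine.

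The case \nameref{t-nil} is immediate, since $\NIL\REPLACE{\VEC{x}}{\VEC{L}} = \NIL$. The cases \nameref{t-out}, \nameref{t-in}, \nameref{t-case} and \nameref{t-assert} share a common shape: each has premises that are either leaf judgements $\ENV, \VEC{x} : \VEC{T}, \PSI \BOM \TJUDG$ about a term, condition or assertion, or process judgements about continuations or components. To the former I would apply the instance assumption \nameref{t-subs}, together with the hypothesis $\ENV, \PSI \BOM \VEC{L} : \VEC{T}$, obtaining $\ENV, \PSI \BOM \TJUDG\REPLACE{\VEC{x}}{\VEC{L}}$; to the latter I would apply the induction hypothesis. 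For \nameref{t-in} and \nameref{t-new} the locally bound names must be commuted past $\VEC{x} : \VEC{T}$ in the environment, which is justified by \nameref{t-env-weak} and \nameref{t-env-strength} under the freshness convention. Reassembling the premises with the same rule gives the conclusion; \nameref{t-repl} is a direct appeal to the induction hypothesis.

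The case \nameref{t-par} is where the frames require care. Its two premises type $P$ and $Q$ in environments and assertions augmented by $\FRAME[\nu]{\cdot}$ and $\FRAME[\PSI]{\cdot}$ extracted from the opposite component. I would first establish, by a routine induction on the defining clauses of the frame functions, that extraction commutes with substitution, i.e. $\FRAME[\PSI]{Q\REPLACE{\VEC{x}}{\VEC{L}}} = \FRAME[\PSI]{Q}\REPLACE{\VEC{x}}{\VEC{L}}$ and $\FRAME[\nu]{Q\REPLACE{\VEC{x}}{\VEC{L}}} = \FRAME[\nu]{Q}$ (the latter using the freshness convention), and symmetrically for $P$. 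Weakening $\ENV, \PSI \BOM \VEC{L} : \VEC{T}$ to the augmented context by \nameref{t-env-weak} and \nameref{t-ass-weak}, I can then apply the induction hypothesis to each premise and, using the commutation identities to rewrite the extracted frames into their substituted forms, recombine via \nameref{t-par}. The side conditions are preserved because the substitution does not change the sets of newly declared names.

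The main obstacle is the \nameref{t-run} case, and it is precisely here that \nameref{t-subs-run} is indispensable. Inversion yields $T \curvearrowleft \ENV'$, $\ENV, \VEC{x} : \VEC{T}, \PSI \BOM M : T$, $\PSI \ENTAILS M \HANDLE P$ and $\ENV', \PSI \BOM P$, and the goal is to type $\RUN{M\REPLACE{\VEC{x}}{\VEC{L}}}$. Applying \nameref{t-subs} to the subject judgement gives $\ENV, \PSI \BOM M\REPLACE{\VEC{x}}{\VEC{L}} : T$, supplying, with $T \curvearrowleft \ENV'$, two premises of \nameref{t-run}. The genuine difficulty is that the substitution may alter which process the subject is a handle for: $M\REPLACE{\VEC{x}}{\VEC{L}}$ may be a handle for some $P'$ distinct from $P$, and \nameref{t-run} then requires $\ENV', \PSI \BOM P'$. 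This is exactly the guarantee packaged in \nameref{t-subs-run}: instantiating its environment with $\ENV, \VEC{x} : \VEC{T}$, its hypotheses $\ENV, \VEC{x} : \VEC{T}, \PSI \BOM M : T$, $T \curvearrowleft \ENV'$ and $\PSI \ENTAILS M\REPLACE{\VEC{x}}{\VEC{L}} \HANDLE P'$ are all available, so it delivers $\ENV', \PSI \BOM P'$. Combining the four facts through \nameref{t-run} yields $\ENV, \PSI \BOM \RUN{M\REPLACE{\VEC{x}}{\VEC{L}}}$, which closes the case and the induction. I expect this step to be the crux, since it is where the higher-order feature — a handle whose meaning is sensitive to substitution — must be reconciled with the statically extracted typing environment $\ENV'$.
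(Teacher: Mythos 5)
Your overall strategy coincides with what the paper indicates for this lemma: induction on the typing derivation, discharging the term/condition/assertion premises with \nameref{t-subs}, and reserving \nameref{t-subs-run} for the \nameref{t-run} case, which the paper explicitly names as the two assumptions the proof requires. The structural cases and your use of weakening in \nameref{t-par} are in order. However, the \nameref{t-run} case as you write it does not close. Inversion gives you $\PSI \ENTAILS M \HANDLE P$ for the \emph{unsubstituted} subject only; to re-apply \nameref{t-run} to $\RUN{M\REPLACE{\VEC{x}}{\VEC{L}}}$ you must first exhibit \emph{some} $P'$ with $\PSI \ENTAILS M\REPLACE{\VEC{x}}{\VEC{L}} \HANDLE P'$, and you declare this hypothesis ``available'' without deriving it. It is not derivable: neither the process rules nor any assumption in Figure~\ref{fig:instance-assumptions} states that handle entailment survives substitution. \nameref{t-subs-run} is \emph{conditional} on exactly that entailment --- it guarantees that whatever $M\REPLACE{\VEC{x}}{\VEC{L}}$ handles is well-typed in $\ENV'$, not that it handles anything at all. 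So your argument proves the conclusion only under the unproved side hypothesis that the substituted subject is still a handle; nothing in Figure~\ref{fig:instance-assumptions} rules out an instance in which a name entailed to be a handle is replaced by a well-typed term that handles nothing, and for such an instance your case analysis (and, arguably, the framework's assumption set itself) breaks. A complete proof must either derive this existence, add it as an explicit instance assumption, or read the handle premise of \nameref{t-run} universally; silently assuming it is the one genuine hole here.

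A secondary, bookkeeping gap sits in the \nameref{t-par} case. The induction hypothesis yields $\ENV, \FRAME[\nu]{Q}, \PSI \COMPOSE \FRAME[\PSI]{Q} \BOM P\REPLACE{\VEC{x}}{\VEC{L}}$, whereas recombining with \nameref{t-par} for the substituted composition requires typing under $\PSI \COMPOSE \FRAME[\PSI]{Q}\REPLACE{\VEC{x}}{\VEC{L}}$ --- a fact your own commutation identity $\FRAME[\PSI]{Q\REPLACE{\VEC{x}}{\VEC{L}}} = \FRAME[\PSI]{Q}\REPLACE{\VEC{x}}{\VEC{L}}$ makes visible. These are different assertions whenever $Q$ contains unguarded assertions mentioning $\VEC{x}$, and ``rewriting'' one into the other inside a judgement is not a logical step: the lemma as stated keeps $\PSI$ fixed, so the induction hypothesis simply does not speak about the substituted frame. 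The usual repair is to strengthen the statement being proved so that the substitution is applied to the composed assertion as well (or to argue separately that typability is preserved when the assertion itself is substituted), and as written this case is incomplete.
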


This, at last, gives us our main result:
\begin{theorem}[Subject reduction]\label{theorem:subject-reduction}
  If $\ENV, \PSI \BOM P \land \PSI \ENVSEP P \trans P'$ then $\ENV, \PSI \BOM P'$.
\end{theorem}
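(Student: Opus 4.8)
The plan is to proceed by induction on the derivation of the reduction $\PSI \ENVSEP P \trans P'$, with one case for each of the four reduction rules \nameref{r-com}, \nameref{r-eval}, \nameref{r-res} and \nameref{r-par}. Two of these are routine. For \nameref{r-eval}, where $\PSI \ENVSEP P \EVALTO Q$ and $\PSI \ENVSEP Q \trans P'$, I would first invoke Subject evaluation (Lemma~\ref{lemma:subject-eval}) to turn $\ENV, \PSI \BOM P$ into $\ENV, \PSI \BOM Q$, and then apply the induction hypothesis to the sub-derivation $\PSI \ENVSEP Q \trans P'$. For \nameref{r-res}, I would invert \nameref{t-new} to obtain $\ENV, x : T, \PSI \BOM P$, apply the induction hypothesis to get $\ENV, x : T, \PSI \BOM P'$, and re-apply \nameref{t-new}; the side condition $\FRESH{x}\PSI$ transfers directly from the reduction.

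The case for \nameref{r-par} is where the frame results are needed. Here $P \PAR Q$ reduces to $P' \PAR Q$ because $\PSI \COMPOSE \FRAME[\PSI]{Q} \ENVSEP P \trans P'$. Inverting \nameref{t-par} gives $\ENV, \FRAME[\nu]{Q}, \PSI \COMPOSE \FRAME[\PSI]{Q} \BOM P$ and $\ENV, \FRAME[\nu]{P}, \PSI \COMPOSE \FRAME[\PSI]{P} \BOM Q$. The induction hypothesis applied to the first judgement directly yields $\ENV, \FRAME[\nu]{Q}, \PSI \COMPOSE \FRAME[\PSI]{Q} \BOM P'$, which is exactly the first premise of \nameref{t-par} for the reduct. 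The subtlety is the second premise, since a reduction may unguard assertions inside $P$, so $Q$ must be retyped relative to $\FRAME[\PSI]{P'}$ rather than $\FRAME[\PSI]{P}$. By Lemma~\ref{lemma:frame-post-reduction} we have $\FRAME[\PSI]{P} \leq \FRAME[\PSI]{P'}$, so the assertion environment for $Q$ only grows, and I would discharge the premise by Assertion environment weakening (Lemma~\ref{lemma:type-aew}), checking that the freshness side conditions of \nameref{r-par} keep the relevant names within $\dom(\ENV)$ and the newly unguarded names fresh for $Q$.

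The principal case is \nameref{r-com}: the redex $\OUTPUT{M}{N\REPLACE{\VEC{x}}{\VEC{L}}}.P \PAR \INPUT{K}{\VEC{x}:\VEC{T}}{N}.Q$ with $\PSI \ENTAILS M \CHANEQ K$ reduces to $P \PAR Q\REPLACE{\VEC{x}}{\VEC{L}}$. Both components are prefix-guarded, so their frames are trivial; inverting \nameref{t-par}, then \nameref{t-out} and \nameref{t-in}, yields in the common environment $\ENV, \PSI$ the output data $M : T$, $N\REPLACE{\VEC{x}}{\VEC{L}} : T'$ with $T \coma T'$, together with $\ENV, \PSI \BOM P$; and the input data $K : T_2$ with $T_2 \coma T_2'$, $\ENV, \VEC{x} : \VEC{T}, \PSI \BOM N : T_2'$ and $\ENV, \VEC{x} : \VEC{T}, \PSI \BOM Q$. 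Reassembling $P \PAR Q\REPLACE{\VEC{x}}{\VEC{L}}$ through \nameref{t-par} (adjusting the continuations' frames by the weakening lemmas) reduces the goal to establishing $\ENV, \PSI \BOM Q\REPLACE{\VEC{x}}{\VEC{L}}$, and Subject substitution (Lemma~\ref{lemma:subject-subs}) reduces this in turn to $\ENV, \PSI \BOM \VEC{L} : \VEC{T}$.

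Deriving $\ENV, \PSI \BOM \VEC{L} : \VEC{T}$ is the crux and the main obstacle. The channel equivalence $\PSI \ENTAILS M \CHANEQ K$ together with assumption \nameref{t-equal} lets me assign $M$ and $K$ a common subject type, whereupon the compatibility requirements on $\coma$ force the carried types $T'$ and $T_2'$ to be subtype-related, so that the transmitted term $N\REPLACE{\VEC{x}}{\VEC{L}}$ can be given the type the pattern $N$ expects. The decisive step is then the instance assumption \nameref{t-comp-term}, which is tailored to exactly this situation: from the instantiated pattern $N\REPLACE{\VEC{x}}{\VEC{L}}$ carrying the type $F(\VEC{T})$ induced by the pattern-variable types, it delivers $\ENV, \PSI \BOM \VEC{L} : \VEC{T}$. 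I expect the delicate points to be reconciling $T'$ and $T_2'$ via channel equivalence, subsumption and the variance conditions on $\coma$ so that \nameref{t-comp-term} actually applies, and, in the higher-order reading where a transmitted term later acts as a handle, ensuring that the environment extracted through $\curvearrowleft$ remains consistent — a concern already absorbed into Subject substitution via assumption \nameref{t-subs-run}.
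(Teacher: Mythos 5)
Your proposal follows essentially the same route as the paper's proof: induction on the derivation of $\PSI \ENVSEP P \trans P'$, discharging \nameref{r-eval} via Lemma~\ref{lemma:subject-eval}, \nameref{r-par} via Lemma~\ref{lemma:frame-post-reduction} together with assertion environment weakening (Lemma~\ref{lemma:type-aew}), and \nameref{r-com} via Lemma~\ref{lemma:subject-subs} plus the instance assumptions (\nameref{t-comp-term}, \nameref{t-equal} and the weakening assumptions), which is precisely the decomposition the paper sets up in its chain of lemmas and sketches in its proof outline. Your identification of the crux in \nameref{r-com} — deriving $\ENV, \PSI \BOM \VEC{L} : \VEC{T}$ so that subject substitution applies — matches the paper's emphasis on \nameref{t-subs} and the weakening assumptions there.
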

\begin{proof}[Outline] 
  Induction in the reduction rules. 
  In many of the cases, the instance assumptions are needed. 
  An example is that in the case of the \nameref{r-com} rule, the substitution assumption \nameref{t-subs} is needed to ensure that the substitution of the received message can be well-typed and the weakening assumptions \nameref{t-env-weak} and \nameref{t-ass-weak} are needed to ensure that the resulting process can be typed within the same type environment as before.
\end{proof}

The subject reduction theorem holds for all valid instances of the generic type system.  
This is all that we can guarantee in our generic setting, as a notion of \emph{safety} will also depend on a definition of runtime error, which will be specific to each instance. 
Safety must therefore be proved individually for each instance.
\end{formatcalculus}

\section{Instances of the generic type system}\label{sec:instances}
We now show how our generic type system can be applied to provide sound type systems for higher-order process calculi. 
We first consider type systems for a version of the \HOPI-calculus \cite{HOPICALC}, and then a type system for the \RHO-calculus \cite{RHOCALC} introduced by Meredith and Radestock.

\begin{formatcalculus}[psi]  
\subsection{The Higher-Order \PI-calculus}\label{sec:hopi}
Parrow et al.\@ \cite{parrow2014higher} give several examples of \HOPSI-instances with process mobility: for example, by including the set of processes $\PROC$ in $\TERMS$, a process $P$ may appear as the object of an output.
If for all $P \in \PROC . P \HANDLE P$ is entailed by all assertions, a language similar to Thomsen's Plain CHOCS \cite{thomsen1993plainchocs} is obtained, and by further allowing both names and processes to appear as objects of an output, we get a simplified version of Sangiorgi's \HOPI-calculus, similar to the one described in \cite{parrow2001introduction}.
We set the parameters for $\TERMS, \CONDITIONS$ and entailment thus:
\begin{center}
\begin{math}
\begin{array}{r @{~} l} 
  \TERMS      & \DEFSYM \NAMES \UNION \PROC  \\ 
  \CONDITIONS & \DEFSYM \SET[big]{ a \CHANEQ b | a, b \in \NAMES} \UNION \SET[big]{ P \HANDLE Q | P, Q \in \PROC } \UNION \SET{\TRUE} \\ 
  \ENTAILS    & \DEFSYM \SET[big]{ (\ASSUNIT, a \CHANEQ a) | a \in \NAMES } \UNION \SET{ (\ASSUNIT, P \HANDLE P) | P \in \PROC } \UNION \SET{(\ASSUNIT, \TRUE)}
\end{array}
\end{math}
\end{center}

\noindent and (initially) with $\ASSERTIONS \DEFSYM \SET{\EMPTYSET}$, $\COMPOSE \DEFSYM \UNION$ and $\ASSUNIT \DEFSYM \EMPTYSET$.
We also include the symbol $\TRUE$ in $\CONDITIONS$ to represent a condition that is entailed by all assertions, and use that for every condition in a $\CASE{\VEC{\PHI} : \VEC{P}}$ construct to obtain a representation of non-deterministic choice.
This parameter setting obviously allows unwanted processes such as
\begin{equation*}
  \OUTPUT{a}{P}.\NIL \PAR \INPUT{a}{x}{x} . \OUTPUT{x}{b} . \NIL \trans \OUTPUT{P}{b}.\NIL
\end{equation*}

\noindent where the process $P$ is substituted for the \emph{subject} $x$ in the output construct $\OUTPUT{x}{b}.\NIL$ after a reduction step.
However, we can now use our generic type system to create an instantiation that will disallow such possibilities. 
We define the types of terms as:
\begin{equation*}
  T \in \TYPES \DCLSYM \CH(T) \ORSYM \DR(\ENV)
\end{equation*}

\noindent The behaviour of channels and first-order variables is captured in the same manner as the simple sorting system for the \PI-calculus \cite{PICALC}. 
Process terms and higher-order variables will have the type $\DR(\ENV)$, where the processes are well-typed in $\ENV$.
Type errors can then be expressed as a simple error predicate, with
\begin{center}
\begin{tabular}{>{$} c <{$} @{\hspace*{2cm}} >{$} c <{$} }
  \dfrac
  { \ENV, \PSI \BOM M : \DR(\ENV') }
  { \ENV, \PSI \BOM \INPUT{M}{x}{x}.Q \trans \WRONG } 
  & 
  \dfrac
  { \ENV, \PSI \BOM M : \CH(T) }
  { \ENV, \PSI \BOM \RUN{M} \trans \WRONG }
\end{tabular}
\end{center}

\noindent as the most relevant rules.
We now define the instance parameters:
\begin{center}
\begin{semantics} 
  \RULE[t-con]
    { }
    { \ENV, \PSI \BOM \top }

  \RULE[t-cha]
    { }
    { \CH(T) \looparrowleft T }
\end{semantics}
\begin{semantics}
  \RULE[t-ass]
    { P : \DR(\ENV) \in \PSI' }
    { \ENV, \PSI \BOM \ASSERT{\PSI'} }

  \RULE[Term$_1$]
    { \ENV(x) = \CH(T) }
    { \ENV, \PSI \BOM x : \CH(T) }
\end{semantics}
\begin{semantics} 
  \RULE[t-end]
    { }
    { \DR(\ENV) \curvearrowleft \ENV }

  \RULE[Term$_2$]
    { P : \DR(\ENV') \in \PSI \quad \ENV', \PSI \BOM P}
    { \ENV, \PSI \BOM P : \DR(\ENV') }
\end{semantics}
\end{center}
\vskip-1em

Here we let the type environment in a drop type be the same type environment that is exposed to the processes, when it is defined as an object in an output prefix, i.e.\@ if we have $\ENV, \PSI \BOM \OUTPUT{a}{P}$, we want $\ENV, \PSI \BOM P : \DR(\ENV)$. 
In this way, when we run the process, we can recall the bound variables and their types at the time when the process was sent. 
To implement this, we shall use the (previously unused) assertions $\PSI$ as type environments for processes.
Thus we redefine $\ASSERTIONS$ as follows:
\begin{equation*}
  \ASSERTIONS \DEFSYM \POWERSET{\SET{P : T \ORSYM P \in \SETNAME{P} \land T \in \TYPES}}
\end{equation*}

\noindent We can now show safety for the type system instance:
\begin{theorem}
  If $\ENV, \PSI \BOM P$ then $P \not\trans \WRONG$.
\end{theorem}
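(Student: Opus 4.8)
The plan is to show that no well-typed process matches either error rule, arguing by induction on the derivation of the error judgement $\ENV, \PSI \BOM P \trans \WRONG$ and inverting the typing derivation $\ENV, \PSI \BOM P$ at each step. Everything rests on one structural fact, which I would isolate first as a \emph{separation} property. Since $\TERMS \DEFSYM \NAMES \UNION \PROC$, every term is either a name or a process, and the term-typing rules assign these two kinds disjoint types: a name can only be given a channel type $\CH(T)$ (the rule for names, Term$_1$, is the only base rule for a name), whereas a process term can only be given a drop type $\DR(\ENV')$ (the rule for process terms, Term$_2$). Because the subtyping order of this instance never relates a $\CH$-type to a $\DR$-type — like Milner's sorting discipline it carries no nontrivial subtyping, so subsumption is harmless here — no single term can be assigned both a channel type and a drop type.

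With separation in hand the two base cases are immediate contradictions. If the error fires for $P = \INPUT{M}{x}{x}.Q$ because $\ENV, \PSI \BOM M : \DR(\ENV')$, then inversion on \nameref{t-in} yields $\ENV, \PSI \BOM M : T$ together with a compatibility $T \looparrowleft T'$; but the only rule establishing $T \looparrowleft T'$ is the channel-compatibility rule t-cha, which forces $T = \CH(T')$, so $M$ would carry both a channel type and a drop type, contradicting separation. Symmetrically, if the error fires for $P = \RUN{M}$ because $\ENV, \PSI \BOM M : \CH(T)$, inversion on \nameref{t-run} yields $\ENV, \PSI \BOM M : S$ with $S \curvearrowleft \ENV'$; the only rule establishing the extraction relation is the drop rule t-end, forcing $S = \DR(\ENV')$, so again $M$ would have both a drop type and a channel type. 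Hence neither base rule applies to a well-typed $P$.

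The remaining cases are the congruence rules by which an error surfaces inside a context. If $P = P_1 \PAR P_2$ with the error in $P_1$, inversion on \nameref{t-par} shows $P_1$ is well-typed in $\ENV$ extended by the frame $\FRAME[\nu]{P_2}$ and the assertion $\PSI \COMPOSE \FRAME[\PSI]{P_2}$ of the other component, so the induction hypothesis applied to $P_1$ in that environment rules out the error; the symmetric case, and the restriction case $P = \NEW{x:T}P_1$ via inversion on \nameref{t-new}, are analogous. This gives $P \not\trans \WRONG$. To obtain the usual safety guarantee — that \emph{no} process reachable from a well-typed $P$ ever steps to $\WRONG$ — one then appeals to subject reduction (Theorem~\ref{theorem:subject-reduction}): every process reachable from $P$ is again well-typed and so, by the argument above, cannot step to $\WRONG$.

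The main obstacle is the separation property and the inversion principles it depends on: one must fix the subtyping order precisely enough to prove that $\CH$- and $\DR$-types are unrelated, and establish inversion lemmas for the term-typing rules in the presence of subsumption. Once these are in place, the tight coupling between $\looparrowleft$ and $\CH$ (via t-cha) on one side, and between $\curvearrowleft$ and $\DR$ (via t-end) on the other, does all the real work; the context cases are routine, provided the error-propagation rules thread the frame and assertion environment exactly as \nameref{t-par} and \nameref{t-new} do.
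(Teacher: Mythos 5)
Your proof is correct, and it rests on exactly the facts the paper's proof must use; the only real difference is the direction of the induction. The paper (whose details are in the technical report) inducts on the typing derivation $\ENV, \PSI \BOM P$, checking for each typing rule that no error rule can fire on a process of that shape, whereas you induct on the derivation of the error judgement $\ENV, \PSI \BOM P \trans \WRONG$ and invert typing at each step. These are dual packagings of the same argument: both hinge on your separation property (names receive only $\CH$-types via Term$_1$, process terms only $\DR$-types via Term$_2$, and the instance has no nontrivial subtyping, so subsumption cannot mix them) together with the coupling of $\looparrowleft$ to $\CH(\cdot)$ through rule t-cha and of $\curvearrowleft$ to $\DR(\cdot)$ through rule t-end. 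Your packaging makes the two base-case contradictions very crisp, and your final appeal to subject reduction (Theorem~\ref{theorem:subject-reduction}) to lift the one-step guarantee to all reachable processes is exactly how the theorem is meant to be used. The one exposure your direction has that the paper's does not: since the paper displays only the two ``most relevant'' error rules, inducting over the error derivation forces you to posit the shape of the remaining congruence rules (that they thread $\FRAME[\nu]{\cdot}$ and $\PSI \COMPOSE \FRAME[\PSI]{\cdot}$ as t-par and t-new do) — you flag this honestly, but it is an assumption about unstated rules, whereas induction over typing derivations needs only to match each typing conclusion against whatever error rules exist. This is a presentational risk rather than a gap, since any sensible completion of the error predicate satisfies it.
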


\noindent The proof is by induction in the rules of $\ENV, \PSI \BOM P$.
Details are given in \cite{expresssos2022techreport}.
\end{formatcalculus}

\subsection{A type system for termination}
\newcommand{\HOpito}{\ensuremath{\textrm{HOpi}_2}}
\begin{formatcalculus}[pi]
  
We now turn our attention to an instance of the generic type system that captures a liveness property. 
Demangeon et al.\@ \cite{demangeon2010550} present a type system for checking termination in variants of the \HOPI-calculus: for any well-typed process $P$ we have that $P \trans* P' \not\trans$.
These authors study \HOpito, a higher-order process calculus in which only processes can be communicated. 
The syntax of \HOpito{} is given by the formation rules%
\begin{center}
\begin{syntax}[h]
  P \IS \NIL 
    \OR \INPUT{a}{X} . P 
    \OR \OUTPUT{a}{Q} . P 
    \OR P_1 \PAR P_2
    \OR \NEW{a:T}P 
    \OR X
\end{syntax}
\end{center}

In this type system, processes $P$ are typed with a type $n$, where $n$ is a natural number called the \emph{level} of $P$. 
Names $a$ have types of the form $\CH^k(\diamond)$, where $\diamond$ denotes the type of processes and $k$ is a natural number, the level of $a$. 
This is interpreted as saying that $a$ is only used to carry processes whose level $n$ is less than $k$. 
Type judgements are of the form $\ENV \BOM P : n$. 
The type rules, shown below, ensure that the level of processes that are sent on any channel $a$ will be strictly smaller than that of $a$.
\begin{flalign*} 
  \RULENAME[in]{In} \enspace
  & \dfrac{
    \begin{aligned}
      \ENV, X : (k-1) \BOM & P : n \\
      \ENV(a) = \CH^k(\diamond) 
    \end{aligned}
  }{
    \ENV \BOM a(X).P : n
  } &
  \RULENAME[out]{Out} \enspace
  &    \dfrac{\begin{aligned}
      \ENV \BOM Q : m \quad
      \ENV \BOM P : n \\
      \ENV(a) = \CH^k(\diamond) \quad m < k
    \end{aligned}}{\ENV \BOM \OUTPUT{a}{Q}.P : \max(k,n)}
 &
   \RULENAME[nil]{Nil} \enspace
  & \dfrac{
  }{
    \ENV \BOM \NIL : 0
  }
\end{flalign*}
\begin{flalign*}
  \RULENAME[new]{New} \enspace
  & \dfrac{
    \ENV, a : \CH^k(\diamond) \BOM P : n
  }{
    \ENV \BOM \NEW{a:T}P : n
  } &  
  \RULENAME[par]{Par} \enspace
  & \dfrac{
      \ENV \BOM P : m \quad
      \ENV \BOM Q : n
  }{
    \ENV \BOM P \PAR Q : \max(m,n)
  } &
  \RULENAME[var]{Var} \enspace
  & \dfrac{
    \ENV (X) = n
  }{
    \ENV  \BOM X : n
  }
\end{flalign*}

It is straightforward to represent the \HOpito\ calculus as an instance of the Higher-Order \PSI-calculus, using a variant of the parameter setting described in section~\ref{sec:hopi}. 
In order to represent the type system, we introduce assertions of the form
\begin{center}
\begin{syntax}[h]
  \PSI \IS n \OR n^- \OR n^+
\end{syntax}
\end{center}

\noindent We use assertions to indicate in which way a channel is to be used; an input use can only be typed in the presence of an assertion $n^{-}$ and output use must be used with an assertion $n^{+}$. 
We have that $n \COMPOSE n^- = n^- \COMPOSE n = n$; that $n \COMPOSE n^+ = n^+ \COMPOSE n = n$; and that $n_1 \COMPOSE n_2 = \max(n_1,n_2)$.
We distinguish explicitly between input uses ($\CH^{k}_{-} (\diamond)$) and output uses ($\CH^{k}_{+} (\diamond)$) of channels:\todo{R1: Explain why this is introduced. Is it related to the compatibility predicate?}
\begin{center}
\begin{syntax}[h]
  T \in \TYPES \IS n \OR \CH^{k}_{-} (\diamond) \OR \CH^{k}_{+} (\diamond)
\end{syntax}
\end{center}

\noindent and we let $\CH^n(\diamond) \curvearrowleft (n-1)$ and $\CH^n(\diamond) \curvearrowleft k$ whenever $k < n$.
Type judgements are of the form $\ENV,m \BOM M : T$ for terms and $\ENV,m \BOM P$ for processes. 
We represent the judgement $\ENV \BOM P : n$ as $\ENV,n \BOM P$.
The type rules for channels are thus:
\begin{flalign*} 
  \RULENAME[Term-in]{Ch-in} \enspace & \dfrac%
    { \ENV(a) = \CH^k_{-}(\diamond) }
    { \ENV,n^- \BOM a : \CH^k_{-}(\diamond) }
  & \RULENAME[Term-out]{Ch-out} \enspace & \dfrac%
    { \ENV(a) = \CH^k_{+}(\diamond) }
    { \ENV,n^+ \BOM a : \CH^k_{+}(\diamond) }
\end{flalign*}
\end{formatcalculus}

\subsection{The \RHO-calculus}
\begin{formatcalculus}[rho]
The Reflective Higher-Order calculus of Meredith and Radestock \cite{RHOCALC} is less well-known than e.g.\@ CHOCS and \HOPI, so we recall it in some detail.
Unlike other calculi, the \RHO-calculus does not assume an infinite set of names: instead, names and processes are both built from the same syntax, so names are structured terms, rather than atomic entities.
The syntax for both processes and names is given by the formation rules: 
\begin{center}
\begin{syntax}[h]
  P \IS \NIL 
    \OR P \PAR P
    \OR \LIFT{x}{P}
    \OR \INPUT{x}{y} . P
    \OR \DROP{x}
    \tabularnewline
x,y \IS \QUOTE{P}
\end{syntax}
\end{center}

\noindent where the syntax for names is simply $\QUOTE{P}$, pronounced \emph{quote $P$}.
Names can be passed around as in the \PI-calculus, as well as un-quoted (called \emph{drop}), and thus higher-order behaviour becomes an inherent property of the calculus, rather than just an extension on top of an already computationally complete language.

The parallel, and the input construct $\INPUT{x}{y} . P$, are similar to their \PI-calculus counterparts.
The \emph{lift} operation, $\LIFT{x}{P}$ is an output construct that quotes the process $P$, thereby creating the \emph{name} $\QUOTE{P}$, and sends it out on $x$; thus the calculus can generate new names at runtime without the need of a $\nu$-operator. 
The converse of lift is the \emph{drop} operation, $\DROP{x}$: it is a request to run the process within a name, by removing the quotes around it. 
This is not performed by a reduction, but rather by a form of substitution
\begin{equation*}
  \DROP{x}\REPLACE{\QUOTE{P}}{x'} = P \qquad\text{if } x \NAMEEQ x'
\end{equation*}

\noindent where the entire \emph{process} $\DROP{x}$ is replaced with the process $P$ found within the substituted name, similar to how process variables are replaced by processes in e.g.\@ \HOPI.
Notably, this means that if $x$ is a \emph{free} name, then $\DROP{x}$ will be a \emph{deadlock}, since $x$ can never be touched by a substitution at runtime.
Otherwise, substitution is the standard, capture-avoiding substitution of names for names, and note in particular that substitution does \emph{not} recur into processes under quotes; i.e.\@ $\QUOTE{P}\REPLACE{x}{y} = \QUOTE{P}$ if $y \not\NAMEEQ \QUOTE{P}$ regardless of whether the name $y$ exists somewhere within $\QUOTE{P}$.

The reduction semantics is given by the standard rules for parallel composition and structural congruence (as in e.g.\@ the \PI-calculus) plus the following rule for communication:
\begin{center}
\begin{tabular}{r @{} >{$} l <{$} }
  \RULENAME[rho:com]{\RHO-com} & \dfrac
    {x_1 \NAMEEQ x_2}
    {\INPUT{x_1}{y} . P \PAR \LIFT{x_2}{Q} \trans P\REPLACE{\QUOTE{Q}}{y} }
\end{tabular}
\end{center}

One subtlety of this calculus concerns the notion of structural congruence, $\STRUCTEQ$. 
It is the usual least congruence on processes, containing \ALPHA-equivalence, $\ALPHAEQ$, and the abelian monoid rules for parallel composition with $\NIL$ as the unit element.
However, with \emph{structured} terms as names, $\ALPHAEQ$ in turn requires a notion of \emph{name equivalence}, written $\NAMEEQ$, that is also used for comparing subjects in the \nameref{rho:com} rule above.
It is defined as the smallest equivalence relation on \emph{quoted} processes, closed forward under the rules: 
\begin{center}
\begin{tabular}{r @{} >{$} l <{$} @{\hspace*{1cm}} r @{} >{$} l <{$} }
  \RULENAME[rho:nameeq1]{\RHO-Nameeq$_1$} & \dfrac
    {P \STRUCTEQ Q}
    {\QUOTE{P} \NAMEEQ \QUOTE{Q}}         & 
  \RULENAME[rho:nameeq2]{\RHO-Nameeq$_2$} & \dfrac
    {}
    {\QUOTE{\DROP{x}} \NAMEEQ x}
\end{tabular}
\end{center}

This yields a mutual recursion between name equivalence, structural congruence and \ALPHA-equivalence, albeit one that always terminates as proved in \cite{RHOCALC}, because both the sets of names and processes are well-founded; their smallest elements being $\NIL$ (the inactive process) and $\QUOTE{\NIL}$ respectively.
\end{formatcalculus}

\subsubsection{Instantiation as a \PSI-calculus}
The \RHO-calculus is interesting in the present setting, because it cannot be encoded in the \PI-calculus in a way that satisfies a number of generally accepted criteria of encodability, similar to those of \cite{GORLA}.
This has been established by one of the authors in \cite{lybech2022encodabilitypaper}.

The key reason for this impossibility lies in the ability of the \RHO-calculus to generate new, \emph{free}, and hence observable, names at runtime, whilst this is not possible in the \PI-calculus; and, dually, its use of name equivalence, which will equate more names than strict syntactic equality.
However, the \RHO-calculus \emph{can} be represented in the \HOPSI-framework as follows.
We define
\begin{center}
\begin{math}
\begin{array}{r @{~} l}
  \TERMS      & \DEFSYM \NAMES \UNION \SET{ \QUOTE{P} | P \in \PROC[psi] } \UNION \SET{ \LQUOTE{P} | P \in \PROC[psi] } \\
  \CONDITIONS & \DEFSYM \SET[big]{ M \CHANEQ N | M, N \in \TERMS } \UNION \;\SET{ P_1 \STRUCTEQ P_2 | P_1, P_2 \in \PROC[psi]}        \\
              &  \UNION \;\SET{ M \HANDLE P | M \in \TERMS \land P \in \PROC[psi] } 
\end{array}
\end{math}
\end{center}

\noindent and (initially) with $\ASSERTIONS \DEFSYM \SET{\EMPTYSET}$, $\COMPOSE \DEFSYM \UNION$ and $\ASSUNIT \DEFSYM \EMPTYSET$ as before.
Note the two different kinds of terms: we use terms of the form $\QUOTE{P}$ to represent a \emph{statically} quoted name in the \RHO-calculus, which can never be dropped and never substituted into.
Conversely, we use $\LQUOTE{P}$ for the equivalent of the object of a $\LIFT{x}{P}$, which in the \RHO-calculus is a \emph{process} that therefore \emph{can} be substituted into, and which later may be dropped.
Furthermore, we shall assume that all \emph{bound names} are implemented as distinct atomic names $x \in \NAMES$; this is a trivial conversion, since their structure has no semantic meaning in the \RHO-calculus.
The encoding is then given by the translation:
\def\NTRANS#1{\PTRANS{\SETNAME{N}}[#1]}
\begin{center}
\begin{math}
\begin{array}{r @{~} l}
  \PTRANS[ \NIL ]                  & = \NIL                                                     \\
  \PTRANS[ P_1 \PAR P_2 ]          & = \PTRANS[ P_1 ] \PAR \PTRANS[ P_2 ]                       \\
  \PTRANS[ {\INPUT[rho]{n}{x}.P} ] & = \INPUT[psi] {\PTRANS[n]}{x}{\llift x\rlift} . \PTRANS[P] \\
  \PTRANS[ \LIFT{n}{P} ]           & = \OUTPUT[psi]{\PTRANS[n]}{ \LQUOTE{\PTRANS[P]}} . \NIL 
\end{array}\hspace*{1cm}
\begin{array}{r @{~} l}
  \PTRANS[ \DROP{x} ]              & = \RUN{ x }                                                \\
  \PTRANS[ \DROP{\QUOTE{P}} ]      & = \NIL                                                     \\
  \PTRANS[ \QUOTE{P} ]             & = \QUOTE{ \NTRANS{P} }                                     \\
  \PTRANS[ x ]                     & = x                                                        
\end{array}
\end{math}
\end{center}
\noindent where $\NTRANS{P}$ is similar to $\PTRANS[P]$ \emph{except} that $\NTRANS{\DROP{\QUOTE{P}}} = \RUN{\QUOTE{\NTRANS{P}}}$.

Note the two translations of drop for processes: the process $\DROP{\QUOTE{P}}$ has no reduction in the \RHO-calculus and is therefore behaviourally equivalent to $\NIL$; but its counterpart $\RUN{\QUOTE{P}}$ \emph{might} have a reduction, since $\RUN{M}$ is not evaluated eagerly in the \HOPSI-calculus.
For the purpose of preserving operational correspondence, we therefore translate the drop of a \emph{free name} $\QUOTE{P}$ as $\NIL$, and the drop of an atomic name $x$ as $\RUN{x}$, since atomic names are bound by construction.
However, we cannot do this within \emph{names}, since name equivalence is determined by the structure, rather than the behaviour of the process within quotes. 
Thus we use the second level translation $\NTRANS{P}$ for statically quoted names, since these can never be dropped.

Lastly, we shall define entailment such that it contains the rule $\PSI \ENTAILS \QUOTE{P} \HANDLE P$, making every term $\QUOTE{P}$ a handle for the process $P$ within, to mirror the duality of names and processes in the \RHO-calculus.
We furthermore include the following rules for entailment of \emph{channel equivalence} $\CHANEQ$, mirroring the rules \nameref{rho:nameeq1} and \nameref{rho:nameeq2} for concluding name equivalence:
\begin{center}
\begin{tabular}{r @{} >{$} l <{$} @{\hspace*{1cm}} r @{} >{$} l <{$} }
  \RULENAME[rho1:chaneq1]{Chaneq$_1$} & \dfrac
    {}
    { \PSI \ENTAILS \QUOTE{ \RUN{M} } \CHANEQ M }
    & 
  \RULENAME[rho1:chaneq2]{Chaneq$_2$} & \dfrac
    { \PSI \ENTAILS P_1 \STRUCTEQ P_2 }
    { \PSI \ENTAILS \QUOTE{P_1} \CHANEQ \QUOTE{P_2} }
\end{tabular}
\end{center}

\noindent including the symmetric and transitive closure of $\CHANEQ$.
We then let the entailment relation for conditions of \emph{structural congruence} $\STRUCTEQ$ be defined such that $\STRUCTEQ$ contains \ALPHA-equivalence; that $(\PROC_{/\STRUCTEQ}, \PAR, \NIL)$ is an abelian monoid; and containing the four congruence rules derived from the above translation:
\begin{center}
\begin{tabular}{r @{} >{$} l <{$} @{\hspace*{0.5cm}} r @{} >{$} l <{$} }
  \RULENAME{Par} & \dfrac
    { \PSI \ENTAILS P_1 \STRUCTEQ P_2 }
    { \PSI \ENTAILS P_1 \PAR R \STRUCTEQ P_2 \PAR R }
    &
  \RULENAME{In} & \dfrac
    { \PSI \ENTAILS M_1 \CHANEQ M_2 \qquad \PSI \ENTAILS P_1 \STRUCTEQ P_2 }
    { \PSI \ENTAILS \INPUT[psi]{M_1}{x_1}{\llift x_1\rlift} . P_1 \STRUCTEQ \INPUT[psi]{M_2}{x_2}{\llift x_2\rlift} . P_2 }
\end{tabular}
\begin{tabular}{r @{} >{$} l <{$} @{\hspace*{1.2cm}} r @{} >{$} l <{$} }
  \RULENAME{Run} & \dfrac
    { \PSI \ENTAILS M_1 \CHANEQ M_2 }
    { \PSI \ENTAILS \RUN{M_1} \STRUCTEQ \RUN{M_2} }
    &
  \RULENAME{Out} & \dfrac
    { \PSI \ENTAILS M_1 \CHANEQ M_2 \qquad \PSI \ENTAILS P_1 \STRUCTEQ P_2 }
    { \PSI \ENTAILS \OUTPUT[psi]{M_1}{\LQUOTE{P_1}} \STRUCTEQ \OUTPUT[psi]{M_2}{\LQUOTE{P_2}} }
\end{tabular}
\end{center}

This translation is sound and complete w.r.t.\@ operational correspondence up to a reasonable notion of behavioural equivalence $\BEHEQ$:
\begin{theorem}[Operational correspondence]
Let $\BEHEQ$ be a notion of behavioural equivalence for processes of the \HOPSI-instance of the \RHO-calculus, that includes at least structural congruence and the axiom $\RUN{\QUOTE{\PTRANS[P]}} \BEHEQ \PTRANS[P]$.
Then $P \trans P' \iff \PTRANS[P] \trans\BEHEQ \PTRANS[P']$.
\end{theorem}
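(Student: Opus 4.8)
The plan is to prove the two directions of the biconditional separately, with both resting on two auxiliary lemmas that relate the \RHO-calculus equivalences and substitution to their \HOPSI-counterparts. The first is a \emph{correspondence of equivalences}: for all names and processes, $x_1 \NAMEEQ x_2$ holds iff $\PSI \ENTAILS \PTRANS[x_1] \CHANEQ \PTRANS[x_2]$, and $P_1 \STRUCTEQ P_2$ in the \RHO-calculus iff $\PSI \ENTAILS \PTRANS[P_1] \STRUCTEQ \PTRANS[P_2]$ in the instance. Since name equivalence, structural congruence and \ALPHA-equivalence are mutually recursive in the \RHO-calculus, and since the entailment rules \nameref{rho1:chaneq1} and \nameref{rho1:chaneq2} were deliberately designed to mirror \nameref{rho:nameeq1} and \nameref{rho:nameeq2}, I would prove this by simultaneous induction following the same well-founded ordering on quoted processes that is shown to terminate in \cite{RHOCALC}. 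The second is a \emph{substitution lemma} of the form $\PTRANS[P\REPLACE{\QUOTE{Q}}{y}] \BEHEQ \PTRANS[P]\REPLACE{y}{L}$, where $L$ is the term that the \HOPSI{} pattern $\LQUOTE{y}$ binds when it consumes the object $\LQUOTE{\PTRANS[Q]}$ emitted by a translated lift; this is proved by induction on the structure of $P$.

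For the completeness direction ($\Rightarrow$) I would induct on the derivation of $P \trans P'$. In the base case the redex is an application of \nameref{rho:com}, so $P = \INPUT[rho]{x_1}{y}.P_1 \PAR \LIFT{x_2}{Q}$ with $x_1 \NAMEEQ x_2$ and $P' = P_1\REPLACE{\QUOTE{Q}}{y}$. Its translation is a parallel composition of a translated input and output whose subjects are $\PTRANS[x_1]$ and $\PTRANS[x_2]$; the equivalence-correspondence lemma turns $x_1 \NAMEEQ x_2$ into $\PSI \ENTAILS \PTRANS[x_2] \CHANEQ \PTRANS[x_1]$, which is exactly the side condition of \nameref{r-com}, so the translated prefixes communicate in one step. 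The resulting continuation is $\PTRANS[P_1]\REPLACE{y}{L}$, which the substitution lemma equates, up to $\BEHEQ$, with $\PTRANS[P_1\REPLACE{\QUOTE{Q}}{y}] = \PTRANS[P']$. The congruence case (reduction under $\PAR$) follows from compositionality of the translation together with \nameref{r-par} and the induction hypothesis, and the structural-congruence case uses the equivalence-correspondence lemma, \nameref{e-struct} inside \nameref{r-eval}, and the fact that $\BEHEQ$ contains structural congruence.

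For the soundness direction ($\Leftarrow$) I would observe that every \HOPSI{} reduction $\trans$ bottoms out in an application of \nameref{r-com}, since the remaining rules \nameref{r-eval}, \nameref{r-res} and \nameref{r-par} all require a sub-reduction, while \nameref{r-com} is the only axiom producing $\trans$. Hence, peeling off \nameref{r-par} and the evaluation steps admitted by \nameref{r-eval}, the communicating redex in $\PTRANS[P]$ must be a translated input/output pair. By the equivalence-correspondence lemma, the channel equivalence required by \nameref{r-com} forces the corresponding \RHO-subjects to be name-equivalent, so the matching \nameref{rho:com} step is available and determines $P'$, and the substitution lemma identifies the reduct with $\PTRANS[P']$ up to $\BEHEQ$. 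The only subtlety is that the evaluation phase may unfold terms of the form $\RUN{\QUOTE{\PTRANS[R]}}$ via \nameref{e-run}; these unfoldings correspond to no \RHO-reduction (they arise from drops that the \RHO-calculus performs during substitution), but they are absorbed into $\BEHEQ$ precisely by the assumed axiom $\RUN{\QUOTE{\PTRANS[R]}} \BEHEQ \PTRANS[R]$, so the single-step-to-single-step correspondence is preserved.

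The main obstacle is the substitution lemma, because it must reconcile two genuinely different mechanisms: in the \RHO-calculus a drop is resolved \emph{eagerly}, as $\DROP{y}\REPLACE{\QUOTE{Q}}{y} = Q$ during substitution, whereas in the instance the same occurrence is $\RUN{y}$ and becomes the \emph{lazy} redex $\RUN{\QUOTE{\PTRANS[Q]}}$, resolved only up to $\BEHEQ$ by the run-axiom. Moreover, the lemma must treat occurrences of $y$ lying \emph{under a quote} differently from those in running position: \RHO-substitution does not recur into statically quoted names, and their images are governed by name equivalence rather than behaviour, which is exactly why the translation uses $\NTRANS{\cdot}$ (with $\NTRANS{\DROP{\QUOTE{R}}} = \RUN{\QUOTE{\NTRANS{R}}}$) inside quotes while using $\PTRANS[\cdot]$ (with $\PTRANS[\DROP{\QUOTE{R}}] = \NIL$) in running position. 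I expect the delicate point to be showing that the residual difference between $\PTRANS[\cdot]$ and $\NTRANS{\cdot}$ under a quote is invisible to \nameref{rho1:chaneq2}, so that the quoted images remain channel-equivalent and communication capability is neither gained nor lost; establishing this coherence, hand in hand with the equivalence-correspondence lemma through the shared well-founded induction, is where the real work lies.
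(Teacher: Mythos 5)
Your proposal is correct and takes essentially the same route as the paper's proof: the same two auxiliary lemmas (preservation of name equivalence as entailed channel equivalence, and commutation of the translation with substitution up to behavioural equivalence, which is where the run-axiom is needed), followed by induction on the two semantics with the communication rules as the key cases. The subtleties you flag --- the eager-versus-lazy resolution of drop and the coherence of the two-level translation under quotes --- are precisely the points the paper's outline acknowledges and delegates to its technical report.
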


\begin{proof}[Outline]
The proof requires a number of steps.
First we show that the translation preserves name equivalence; i.e.\@ $x_1 \NAMEEQ x_2 \iff \ASSUNIT \ENTAILS \PTRANS[x_1] \CHANEQ \PTRANS[x_2]$ by induction in the rules of name equivalence and structural congruence.
Then we show that substitution can be moved out of the translation; i.e.\@ $\PTRANS[P\sigma] \BEHEQ \PTRANS[P]\PTRANS[\sigma]$ and $\ASSUNIT \ENTAILS \PTRANS[(n)\sigma] \CHANEQ (\PTRANS[n])\PTRANS[\sigma]$, where $\sigma \DEFSYM \REPLACE[rho]{\QUOTE{Q}}{x}$ and $\PTRANS[\sigma] \DEFSYM \REPLACE[psi]{\PTRANS[x]}{\PTRANS[\QUOTE{Q}]}$, by induction in the clauses of the translation function. 
This step relies on our assumption about $\BEHEQ$.
Lastly we can show soundness and completeness w.r.t.\@ operational correspondence by induction in the two semantics.
In both cases, the interesting clauses are the communication rules, which require the aforementioned substitution and name equivalence preservation results.
Details are available in \cite{expresssos2022techreport}.
\end{proof}

\subsubsection{A type system for reflection}
Other higher-order calculi such as CHOCS and \HOPI{} can be encoded in the \PI-calculus and may thus be typable through translation, but as we noted above there cannot be such an encoding of the \RHO-calculus into the \PI-calculus.
Thus, we cannot hope to create a type system for the \RHO-calculus by adapting an existing first-order type system.
In fact, we are not aware of any type system for the \RHO-calculus, so we shall now create one by instantiating our generic type system.
We let types for names be of the form 
\begin{equation*}
  T \in \TYPES \DCLSYM \TUPLE{T, \ENV} \ORSYM \TUPLE{B, \ENV}
\end{equation*}

\noindent where $B$ is a basis type, and $\ENV$ is a type environment representing the possibility of executing the process within the name.
Furthermore we shall use assertions as type environments for processes
as we previously did with \HOPI, so we update the definition
accordingly.
\begin{align*}
  \ASSERTIONS & \DEFSYM \POWERSET{\SET{\QUOTE{P} : T \mid P \in \PROC[psi] \land T \in \TYPES }
    \UNION \SET{\LQUOTE{P} : T \mid P \in \PROC[psi] \land T \in \TYPES }}
\end{align*}

\noindent with assertion unit and composition as $\ASSUNIT \DEFSYM \EMPTYSET$ and $\COMPOSE \DEFSYM \UNION$ respectively.
Note that by construction $\forall x \in \NAMES[psi]. \FRESH{x}{\QUOTE{P}}$, so substitution can only occur in terms of the form $\LQUOTE{P} : T$.
We then append an assertion to the encoding of input and output:
\begin{align*}
  \PTRANS[ \LIFT{\QUOTE{R}}{P} ] & \DEFSYM \OUTPUT[psi]{\QUOTE{\PTRANS[R]}}{ \LQUOTE{\PTRANS[P]}} . \NIL \PAR \ASSERT{\SET{\QUOTE{\PTRANS[R]} : T, \LQUOTE{\PTRANS[P]} : T'}} \\
  \PTRANS[ {\INPUT[rho]{\QUOTE{R}}{x}.P} ] & \DEFSYM \INPUT[psi] {\QUOTE{\PTRANS[R]}}{x}{\llift x\rlift} . \PTRANS[P] \PAR \ASSERT{\SET{\QUOTE{\PTRANS[R]} : T}}
\end{align*}

Lastly, we also need to take the type information into account when concluding channel equivalence, to ensure that two terms with initially dissimilar types cannot become channel equivalent after a substitution.
Thus we redefine the entailment rule \nameref{rho1:chaneq2} as follows:
\begin{align*}
  \RULENAME{Chaneq$_2$} & \dfrac
    { \ENV, \PSI \ENTAILS P_1 \STRUCTEQ P_2 \qquad \ENV, \PSI \BOM \QUOTE{P_1} : T \iff \ENV, \PSI \BOM \QUOTE{P_2} : T }
    { \ENV, \PSI \ENTAILS \QUOTE{P_1} \CHANEQ \QUOTE{P_2} }
\end{align*}

\noindent Now we can instantiate the generic type system by defining the instance parameters:
\begin{center}
  \begin{tabular}{ @{} r @{} >{$} l <{$} @{\hspace*{0.3cm}} r @{} >{$} l <{$} @{\hspace*{0.3cm}} r @{} >{$} l <{$} @{} }
  \RULENAME[rho:t-term1]{Term-1} & \dfrac
    { \QUOTE{P} : \TUPLE{T, \ENV'} \in \PSI \qquad \ENV', \PSI \BOM P }
    { \ENV, \PSI \BOM \QUOTE{P} : \TUPLE{T, \ENV'} }
    &
  \RULENAME[rho:t-term2]{Term-2} & \dfrac
    { \LQUOTE{P} : \TUPLE{T, \ENV'} \in \PSI \qquad \ENV', \PSI \BOM P  }
    { \ENV, \PSI \BOM \LQUOTE{P} : \TUPLE{T, \ENV'} } 
  \end{tabular}
  \vskip5pt
  \begin{tabular}{ @{} r @{} >{$} l <{$} @{\hspace*{0.3cm}} r @{} >{$} l <{$} @{\hspace*{0.3cm}} r @{} >{$} l <{$} @{} @{\hspace*{0.3cm}} r @{} >{$} l <{$} }
  \RULENAME{t-ass} & \dfrac
    { P : T \in \PSI' \implies T \curvearrowleft \ENV }
    {\ENV, \PSI \BOM \ASSERT{\PSI'} }
    &
  \RULENAME{t-cha} & ~ \TUPLE{T, \ENV} \looparrowleft T 
    &
  \RULENAME{t-end} & ~\TUPLE{T, \ENV} \curvearrowleft \ENV
    &
  \RULENAME[rho:t-term3]{Term-3} & \dfrac
    { \ENV(x) = T }
    { \ENV, \PSI \BOM x : T }
  \end{tabular}
\end{center}

Note in particular the rules \nameref{rho:t-term1} and \nameref{rho:t-term2}: these rules say that the process within a term must be well-typed w.r.t.\@ the type environment in the second component of its type, and that the process-type pair must be represented in the assertion.\todo{R2: Example of ill-typed and well-typed processes?}

Since we include \nameref{rho1:chaneq2} in order to properly simulate the \RHO-calculus, all names that eventually become equal during reduction must have the same type.  
This amounts to requiring that the programmer must know in advance all the names that will be generated by the program during execution.  
We have yet to find a type system for the \RHO-calculus without this constraint.

\section{Conclusions and future work}\label{sec:conclusion}
We have presented a generic type system for higher-order \PSI-calculi, which extends a previous type system for first-order \PSI-calculi.
Like its predecessor, type judgements for processes are of the form $\ENV \BOM P$ and are given by a fixed set of rules.
Terms, assertions and conditions are assumed to form nominal datatypes, and only a few requirements on type rules are imposed.

The generic type system allows us to identify what should be required of type systems for higher-order process calculi that are instances of the \PSI-calculus; these requirements take the form of instance assumptions.
Thus it may also yield important insights into the general structure of type systems for higher-order calculi, and it may therefore also be taken as a starting point for developing more advanced type systems for any language that can be shown to be an instance of higher-order \PSI-calculi.

Our type system satisfies a general subject-reduction property and can be instantiated to yield type systems with a notion of channel safety for higher-order calculi such as CHOCS, \HOPI{} and also the \RHO-calculus. 
The latter in particular is interesting, as there is no valid encoding of the \RHO-calculus into the \PI-calculus, and thus we cannot capture higher-order typability in a purely first-order setting.  
This establishes that our generic type system is richer than first-order type systems.
However, typability in the \RHO-calculus comes at the cost of necessitating that we include type information directly in the definition of channel equivalence. 
This amounts to saying that the programmer must know (and specify) in advance the type of all names that will be generated during the course of program evaluation. 
We do not know whether it is possible to create other (non-trivial) type systems for the \RHO-calculus without such a restriction.

There are two important lines of future work in this direction:
In \cite{huttel2013resourcespsi}, Hüttel extends the generic type system to consider more general notions of subtyping and resource awareness, and in \cite{huttel2016sessionpsi} he also considers \emph{session types} for psi-calculi.
Both of these extensions are formulated for first-order \PSI-calculi only, and they would therefore be relevant to also consider in the higher-order setting.

\bibliographystyle{eptcs}
\bibliography{literature}
\end{document}